\DeclareMathOperator{\Ker}{Ker}
\DeclareMathOperator{\Pre}{Pre}
\DeclareMathOperator*{\argmin}{arg\,min}
\DeclareMathOperator{\merge}{merge}
\providecommand*{\abs}[1]{\ensuremath{\lvert #1 \rvert}}
\providecommand*{\seq}[3]{\ensuremath{#1_{#2}, \dotsc, #1_{#3}}}
\providecommand*{\word}[3]{\ensuremath{#1_{#2} \dotsm #1_{#3}}}
\providecommand{\simdiff}[0]{\ensuremath{\mathop{\triangle}}}
\providecommand*{\nat}[0]{\ensuremath{\mathbb N}}
\providecommand*{\integer}[0]{\ensuremath{\mathbb Z}}
\begin{document}

\markboth{A.\ Maletti and D.\ Quernheim}{Optimal Hyper-Minimization}

%%%%%%%%%%%%%%%%%%%%% Publisher's Area please ignore %%%%%%%%%%%%%%%
%
\catchline{}{}{}{}{}
%
%%%%%%%%%%%%%%%%%%%%%%%%%%%%%%%%%%%%%%%%%%%%%%%%%%%%%%%%%%%%%%%%%%%%

\title{OPTIMAL HYPER-MINIMIZATION\footnote{This is an extended and
    revised version of [A.\ Maletti: \emph{Better hyper-minimization —
      not as fast, but fewer errors}. In Proc.\ CIAA, volume 6482 of
    LNCS, pages 201-210. Springer-Verlag, 2011].}} 
\author{ANDREAS MALETTI\footnote{The work was carried out while the
    author was at the \emph{Departament de Filologies Rom\`aniques,
      Universitat Rovira i Virgili} (Tarragona, Spain) and was
    supported by the \emph{Ministerio de Educaci\'on y Ciencia} (MEC)
    grants JDCI-2007-760 and MTM-2007-63422.}~ \and DANIEL QUERNHEIM} 
\address{Institute for Natural Language Processing, Universit\"at
  Stuttgart \\ Azenbergstra\ss e~12, 70174 Stuttgart, Germany \\
  \email{\{andreas.maletti, daniel.quernheim\}@ims.uni-stuttgart.de}}

\maketitle

\begin{history}
\received{(31 January 2011)}
%\revised{(Day Month Year)}
\accepted{(Day Month Year)}
\comby{(xxxxxxxxxx)}
\end{history}

\begin{abstract}
  Minimal deterministic finite automata~(\textsc{dfa}s) can be reduced
  further at the expense of a finite number of errors.  Recently, such
  minimization algorithms have been improved to run in time~$O(n \log
  n)$, where $n$~is the number of states of the input \textsc{dfa}, by
  [\textsc{Gawrychowski} and \textsc{Je\.{z}}: Hyper-minimisation made
  efficient.  Proc.\ \textsc{Mfcs}, \textsc{Lncs}~5734, 2009] and
  [\textsc{Holzer} and \textsc{Maletti}: An {$n \log n$} algorithm for
  hyper-minimizing a (minimized) deterministic
  automaton. \emph{Theor.\ Comput.\ Sci.}~411, 2010].  Both algorithms
  return a \textsc{dfa} that is as small as possible, while only
  committing a finite number of errors.  These algorithms are further
  improved to return a \textsc{dfa} that commits the least number of
  errors at the expense of an increased (quadratic) run-time.  This
  solves an open problem of [\textsc{Badr}, \textsc{Geffert}, and
  \textsc{Shipman}: Hyper-minimizing minimized deterministic finite
  state automata. \emph{\textsc{Rairo} Theor.\ Inf.\ Appl.}~43,
  2009].  In addition, an experimental study on random automata is
  performed and the effects of the existing algorithms and the new
  algorithm are reported.
\end{abstract}

\keywords{deterministic finite automaton; minimization; error analysis.}

\ccode{2010 Mathematics Subject Classification: 68Q45, 68Q25, 68W40}

\section{Introduction}
\label{sec:Intro}
Deterministic finite automata (\textsc{dfa}s)~\cite{yu97} are used in
a vast number of applications that require huge automata like speech
processing~\cite{moh97} or linguistic analysis~\cite{joh72}.  To keep
the operations efficient, minimal \textsc{dfa} are typically used in
applications.  A minimal \textsc{dfa} is such that all equivalent
\textsc{dfa}s are larger, where the size is measured by the number of
states.  The asymptotically fastest minimization algorithm runs in
time~$O(n \log n)$ and is due to \textsc{Hopcroft}~\cite{hopull79},
where $n$~is the size of the input \textsc{dfa}.

Recently, stronger minimization procedures, called hyper-minimization,
have been investigated~\cite{badgefshi07,bad09,gawjez09,holmal10,sch10}.
They can efficiently compress minimal \textsc{dfa}s even further at
the expense of a finite number of errors.  The fastest
hyper-minimization algorithms~\cite{gawjez09,holmal10} run in
time~$O(n \log n)$.  More specifically, given an input
\textsc{dfa}~$M$, a hyper-minimization algorithm returns a
\emph{hyper-minimal \textsc{dfa} for~$M$}, which
\begin{itemize}
\item recognizes the same language as~$M$ up to a finite number of
  errors, and
\item is minimal among all \textsc{dfa}s with the former property
  (hyper-minimal).
\end{itemize}

In this contribution, we extend a known hyper-minimization algorithm
to return a \emph{hyper-optimal \textsc{dfa} for~$M$}, which is a
hyper-minimal \textsc{dfa} for~$M$ that commits the least number of
errors among all hyper-minimal \textsc{dfa}s for~$M$.  Moreover, the
algorithm returns the number of committed errors, which allows a user
to disregard the returned \textsc{dfa} if the number is unacceptably
large.  Our algorithm is based essentially on a syntactic
characterization of hyper-minimal \textsc{dfa}s for~$M$ (see Theorems
3.8~and~3.9 of~\cite{badgefshi07}).  Roughly speaking, two
hyper-minimal \textsc{dfa}s for~$M$ differ in exactly three
aspects~\cite{badgefshi07}: (i)~the finality of the states~$P$ that
are reachable by only finitely many strings, (ii)~the transitions from
states of~$P$ to states not in~$P$, and (iii)~the initial state.  The
characterization has two main uses: It allows us to compute the exact
number of errors for each hyper-minimal \textsc{dfa} for~$M$, and it
allows us to easily consider all hyper-minimal \textsc{dfa}s for~$M$
in order to find a hyper-optimal \textsc{dfa} for~$M$.  We thus solve
a remaining open problem of~\cite{badgefshi07}.  Unfortunately, the
time complexity of the obtained algorithm is~$O(n^2)$, and it remains
an open problem whether the algorithm can be improved to run in
time~$O(n \log n)$.

Finally, we demonstrate hyper-minimization and the new algorithm on
test \textsc{dfa}s, which we generated from random non-deterministic
finite automata~\cite{yu97,tabvar05}.  The difficult cases for
minimization that were identified in~\cite{tabvar05} also prove to be
difficult for hyper-minimization in the sense that only a small
reduction is possible at the expense of a significant amount of
errors.  The new algorithm alleviates this problem by avoiding a large
number of mistakes.  Outside the hard instances of~\cite{tabvar05},
already hyper-minimization reduces the size nicely at the expense of
only a few errors.

\section{Preliminaries}
\label{sec:Prelim}
The set of integers is~$\integer$, and the subset of nonnegative
integers is~$\nat$.  If the symmetric difference $S \simdiff T = (S
\setminus T) \cup (T \setminus S)$ of two sets $S$~and~$T$ is finite,
then $S$~and~$T$ are almost-equal.  Each finite set~$\Sigma$ is an
alphabet, and the set of all strings over~$\Sigma$ is~$\Sigma^*$.  The
empty string is $\varepsilon$, and the concatenation of two strings
$u, v \in \Sigma^*$ is denoted by the juxtaposition~$uv$.  The length
of the string~$w = \word \sigma 1k$ with $\seq \sigma1k \in \Sigma$
is~$\abs w = k$.  A string~$u \in \Sigma^*$ is a prefix of~$w$ if
there exists a string~$v \in \Sigma^*$ such that $w = uv$.  Any subset
$L \subseteq \Sigma^*$ is a language over~$\Sigma$.

A deterministic finite automaton (for short: \textsc{dfa}) is a
tuple~$M = (Q, \Sigma, q_0, \delta, F)$, in which $Q$~is a finite set
of states, $\Sigma$ is an alphabet of input symbols, $q_0 \in Q$ is an
initial state, $\delta \colon Q \times \Sigma \to Q$ is a transition
mapping, and $F \subseteq Q$ is a set of final states.  The transition
mapping~$\delta$ extends to a mapping $\underline\delta \colon Q
\times \Sigma^* \to Q$ by $\underline\delta(q, \varepsilon) = q$ and
$\underline\delta(q, \sigma w) = \underline\delta(\delta(q, \sigma),
w)$ for every $q \in Q$, $\sigma \in \Sigma$, and $w \in \Sigma^*$.
For every $q \in Q$, let
\[ L(M, q) = \{ w \in \Sigma^* \mid \underline\delta(q_0, w) = q \}
\qquad \text{and} \qquad L(q, M) = \{ w \in \Sigma^* \mid
\underline\delta(q, w) \in F\} \enspace. \] Intuitively, $L(M,
q)$~contains all strings that take~$M$ (from the initial state~$q_0$)
into the state~$q$, and $L(q, M)$~contains all strings that take~$M$
from~$q$ into a final state.  Moreover, $\Ker(M) = \{ q \in Q \mid
L(M, q) \text{ infinite} \}$ is the set of kernel states of~$M$, and
$\Pre(M) = Q \setminus \Ker(M)$ is the set of preamble states.  The
sets $\Ker(M)$~and~$\Pre(M)$ can be computed in time~$O(m)$, where $m
= \abs{Q \times \Sigma}$.  The \textsc{dfa}~$M$ recognizes the
language $L(M) = L(q_0, M) = \bigcup_{q \in F} L(M, q)$.

An equivalence relation~$\mathord{\equiv} \subseteq S \times S$ is a
reflexive, symmetric, and transitive binary relation.  The equivalence
class of an element~$s \in S$ is $[s]_\equiv = \{ s' \in S \mid s
\equiv s'\}$ and $[S]_\equiv = \{ [s]_\equiv \mid s \in S\}$.  A weak
partition of~$S$ is a set~$\Pi$ such that (i)~$A \subseteq S$ for
every $A \in \Pi$, (ii)~$A_1 \cap A_2 = \emptyset$ for all different
$A_1, A_2 \in \Pi$, and (iii)~$S = \bigcup_{A \in \Pi} A$.  An
equivalence relation $\mathord{\equiv} \subseteq Q \times Q$ on the
states of the \textsc{dfa}~$M = (Q, \Sigma, q_0, \delta, F)$ is a
congruence relation on~$M$ if $\delta(q_1, \sigma) \equiv \delta(q_2,
\sigma)$ for all $q_1 \equiv q_2$ and $\sigma \in \Sigma$.

Let $M = (Q, \Sigma, q_0, \delta, F)$ and $N = (P, \Sigma, p_0, \mu,
G)$ be two \textsc{dfa}s.  A mapping $h \colon Q \to P$ is a
transition homomorphism if $h(\delta(q, \sigma)) = \mu(h(q), \sigma)$
for every $q \in Q$ and $\sigma \in \Sigma$.  If additionally $q \in
F$ if and only if $h(q) \in G$ for every $q \in Q$, then $h$~is a
(\textsc{dfa}) homomorphism.  In both cases, $h$~is an isomorphism if
it is bijective.  Finally, we say that the \textsc{dfa}s $M$~and~$N$
are (transition and \textsc{dfa}) isomorphic if there exists a
(transition and \textsc{dfa}, respectively) isomorphism~$h \colon Q
\to P$.

The \textsc{dfa}s $M$~and~$N$ are equivalent if $L(M) = L(N)$.
Clearly, (\textsc{dfa}) isomorphic \textsc{dfa}s are equivalent.  Two
states $q \in Q$ and $p \in P$ are equivalent, denoted by~$q \equiv
p$, if $L(q, M) = L(p, N)$.\footnote{While it might not be clear from
  the notation $q \equiv p$ to which \textsc{dfa} a state belongs, it
  will typically be clear from the context.  In particular, we might
  have $M = N$; i.e., we might relate two states from the same
  \textsc{dfa}.}  The equivalence~$\mathord{\equiv} \subseteq Q \times
Q$ is a congruence relation on~$M$.  The \textsc{dfa}~$M$ is minimal
if it does not have equivalent states (i.e., $q_1 \equiv q_2$ implies
$q_1 = q_2$ for all $q_1, q_2 \in Q$).  The name `minimal' is
justified by the fact that there does not exist a \textsc{dfa} with
strictly fewer states that recognizes the same language as a minimal
\textsc{dfa}.  A minimal \textsc{dfa} that is equivalent to~$M$ can be
computed efficiently using \textsc{Hopcroft}'s algorithm~\cite{hop71},
which runs in time~$O(m \log n)$ where $m = \abs{Q \times \Sigma}$ and
$n = \abs Q$.  Moreover, minimal \textsc{dfa}s are equivalent if and
only if they are isomorphic.

Similarly, the \textsc{dfa}s $M$~and~$N$ are almost-equivalent if
$L(M)$~and~$L(N)$ are almost-equal.  The states $q \in Q$ and $p \in
P$ are almost-equivalent, which is denoted by $q \sim p$, if $L(q,
M)$~and~$L(p, M)$ are almost-equal.  The
almost-equivalence~$\mathord{\sim} \subseteq Q \times Q$ is also a
congruence.  The minimal \textsc{dfa}~$M$ is hyper-minimal if it does
not have a pair $(q_1, q_2) \in Q \times Q$ of different, but
almost-equivalent states such that $\{q_1, q_2\} \cap \Pre(M) \neq
\emptyset$.  Again, the name `hyper-minimal' is justified by the fact
that there does not exist a \textsc{dfa} with strictly fewer states
that recognizes an almost-equivalent language (see Theorem~3.4
of~\cite{badgefshi07}).  A hyper-minimal \textsc{dfa} that is
almost-equivalent to~$M$ is called ``hyper-minimal for~$M$'' and can
be computed efficiently using the algorithms
of~\cite{gawjez09,holmal10}, which also run in time~$O(m \log n)$.  A
structural characterization of hyper-minimal \textsc{dfa}s is
presented in Theorems 3.8~and~3.9 of~\cite{badgefshi07}, which we
reproduce here.

\begin{theorem}[{\protect{see~\cite{badgefshi07}}}]
  \label{thm:Struc}
  Let $M = (Q, \Sigma, q_0, \delta, F)$ and $N = (P, \Sigma, p_0, \mu,
  G)$ be almost-equivalent \textsc{dfa}s.  Then $\underline\delta(q_0,
  w) \sim \underline\mu(p_0, w)$ for every $w \in \Sigma^*$.  In
  addition, if $M$~and~$N$ are hyper-minimal, then there exists a
  mapping $h \colon Q \to P$ such that 
  \begin{itemize}
  \item $q \sim h(q)$ for every $q \in Q$,
  \item $h$ yields a transition isomorphism between
    $\Pre(M)$~and~$\Pre(N)$, and
  \item $h$ yields a \textsc{dfa} isomorphism between
    $\Ker(M)$~and~$\Ker(N)$.
  \end{itemize}
\end{theorem}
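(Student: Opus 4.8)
The plan is to treat the two assertions separately, and within the second to handle the kernel and the preamble by quite different arguments.

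For the first assertion, fix $w \in \Sigma^*$ and set $q = \underline\delta(q_0, w)$ and $p = \underline\mu(p_0, w)$. I would observe that $L(q, M) = \{ v \mid wv \in L(M)\}$ and $L(p, N) = \{v \mid wv \in L(N)\}$, so that $v \in L(q,M) \simdiff L(p,N)$ holds if and only if $wv \in L(M) \simdiff L(N)$. Since $L(M)$ and $L(N)$ are almost-equal, the right-hand difference is finite, and as $v \mapsto wv$ is injective the left-hand difference is finite as well; hence $q \sim p$. This step is routine and uses only almost-equivalence, not hyper-minimality.

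For the kernel isomorphism, the key observation — and the step I expect to be the crux — is that almost-equivalence of the whole languages sharpens to \emph{exact} equivalence of residuals once the access word is long. Writing $E = L(M) \simdiff L(N)$ (finite) and $d = \max_{e \in E} \abs e$, for any $w$ with $\abs w > d$ no string of $E$ has $w$ as a prefix, so $L(\underline\delta(q_0,w), M) = L(\underline\mu(p_0,w), N)$; that is, $\underline\delta(q_0,w) \equiv \underline\mu(p_0,w)$. I would fix a threshold $\ell$ exceeding $d$, $\abs Q$, and $\abs P$, so that words of length at least~$\ell$ also force kernel states by a pumping argument. For $q \in \Ker(M)$ I set $h(q) = \underline\mu(p_0, w)$ for some $w$ of length at least~$\ell$ reaching~$q$ (such $w$ exist because a kernel state is reached by arbitrarily long words). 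Well-definedness and injectivity then follow from minimality: two such words reaching the same $M$-state yield $\equiv$-equivalent, hence equal, $N$-states, and symmetrically. The transition-homomorphism property on the kernel is immediate by appending a symbol to the access word, finality is preserved because $q \equiv h(q)$, and the same construction from $N$ to $M$ gives a two-sided inverse. The whole point of routing through $\equiv$ rather than $\sim$ is to sidestep the fact that two \emph{distinct} kernel states may be merely almost-equivalent, so that collapsing $\sim$ directly would fail; this step needs minimality but not hyper-minimality.

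For the preamble I would instead exploit hyper-minimality to show that $\sim$ is already functional and range-preserving there. Given $q \in \Pre(M)$ and any $w$ reaching~$q$, the first assertion gives $\underline\mu(p_0,w) \sim q$; this partner must lie in $\Pre(N)$, for otherwise it is a kernel state reached by arbitrarily long words, whose $M$-preimages under long words are kernel states almost-equivalent to~$q$, producing a pair of distinct almost-equivalent states of~$M$ with one in the preamble and contradicting hyper-minimality of~$M$. Uniqueness is the same contradiction applied inside~$N$: two distinct preamble partners would be almost-equivalent, violating hyper-minimality of~$N$. Defining $h(q)$ to be this unique partner yields a well-defined bijection $\Pre(M) \to \Pre(N)$ (bijectivity by the symmetric argument), and the transition-isomorphism property for transitions that stay within the preamble follows by computing the partner of $\delta(q,\sigma)$ through the access word~$w\sigma$. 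I would stress that transitions leaving the preamble into the kernel are deliberately \emph{not} required to match, since this is precisely the freedom in which two hyper-minimal \textsc{dfa}s may differ. Finally, combining the kernel and preamble maps on their disjoint domains gives a single $h \colon Q \to P$ with $q \sim h(q)$ throughout, using $q \equiv h(q) \Rightarrow q \sim h(q)$ on the kernel, which completes the proof.
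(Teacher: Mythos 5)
The paper does not prove this theorem at all --- it is quoted from Badr, Geffert, and Shipman (their Theorems 3.8 and 3.9), so there is no in-paper proof to compare against. Your argument is correct and is essentially the standard proof from that source: the residual computation for the first assertion, the sharpening of $\sim$ to $\equiv$ via access words longer than the longest error string (plus a pumping bound to force kernel states) for the kernel isomorphism, and the use of hyper-minimality to make the $\sim$-partner of a preamble state unique and preamble-valued; all the well-definedness, injectivity, and two-sided-inverse checks go through as you describe.
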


\begin{algorithm}[t]
  \begin{algorithmic}[2]
    \REQUIRE a \textsc{dfa}~$M = (Q, \Sigma, q_0, \delta, F)$ with $m
    = \abs{Q \times \Sigma}$ and $n = \abs Q$ \smallskip 
    \STATE $M \gets \textsc{Minimize}(M)$
      \COMMENT{\textsc{Hopcroft}'s algorithm; $O(m \log n)$}
    \STATE $\mathord\sim \gets \textsc{CompAEquiv}(M)$
      \COMMENT{compute almost-equivalence; $O(m \log n)$}
    \STATE $M \gets \textsc{MergeStates}(M, \Ker(M), \mathord\sim)$
      \COMMENT{merge almost-equivalent states; $O(m)$}
    \RETURN $M$
  \end{algorithmic}
  \caption{Structure of a hyper-minimization algorithm.} 
  \label{alg:Overall}
\end{algorithm}

\section{Hyper-minimization}
\label{sec:Hyper}
Hyper-minimization as introduced in~\cite{badgefshi07} is a form of
lossy compression with the goal of reducing the size of a minimal
\textsc{dfa} at the expense of a finite number of errors.  More
formally, hyper-minimization aims to find a hyper-minimal \textsc{dfa}
for an input \textsc{dfa}.  Several hyper-minimization algorithms
exist~\cite{badgefshi07,bad09,gawjez09,holmal10}, and the
overall structure of the hyper-minimization algorithm
of~\cite{holmal10} is displayed in Algorithm~\ref{alg:Overall}.  For
the following discussion let $M = (Q, \Sigma, q_0, \delta, F)$ be a
\textsc{dfa}, and let $m = \abs{Q \times \Sigma}$ and $n = \abs Q$ be
the number of its transitions and the number of its states,
respectively.

The most interesting component of Algorithm~\ref{alg:Overall} is the
merging process. In general, the merge of a state $p \in Q$ into
another state $q \in Q$ redirects all incoming transitions of~$p$
to~$q$.  If $p = q_0$ then $q$~is the new initial state.  The finality
of~$q$ is not changed even if $p$ is final.  Clearly, the state~$p$
can be deleted after the merge if $p \neq q$.  Formally, $\merge_M(p
\to q) = (P, \Sigma, p_0, \mu, F)$, where $P = (Q \setminus \{p\})
\cup \{q\}$ and for every $q' \in Q$ and $\sigma \in \Sigma$
\[ p_0 =
\begin{cases}
  q & \text{if } q_0 = p \\
  q_0 & \text{otherwise}
\end{cases}
\qquad \text{and} \qquad
\mu(q', \sigma) =
\begin{cases}
  q & \text{if } \delta(q', \sigma) = p \\
  \delta(q', \sigma) & \text{otherwise.}
\end{cases}
\]

\begin{lemma}
  \label{lm:MergeErrors}
  Let $p, q \in Q$ and $N = \merge_M(p \to q)$.  Then 
  \[ L(M) \simdiff L(N) = \{ uw \mid u \in L(M, p), w \in
  L(p, M) \simdiff L(q, M) \} \enspace. \]
\end{lemma}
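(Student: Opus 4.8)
The plan is to prove the set identity by a string-by-string membership comparison, using that $x \in L(M) \simdiff L(N)$ if and only if exactly one of $\underline\delta(q_0, x) \in F$ and $\underline\mu(p_0, x) \in F$ holds. The key starting observation is that the merged transition map factors as $\mu(q', \sigma) = \rho(\delta(q', \sigma))$, where $\rho \colon Q \to P$ is the identity on $Q \setminus \{p\}$ and sends $p$ to $q$; likewise $p_0 = \rho(q_0)$. This factorization is what lets me track the run of $N$ against the run of $M$ step by step, since $N$ simply applies $\delta$ and then collapses $p$ to $q$.

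First I would establish a run-simulation lemma by induction on the length of a prefix: on any input, the states visited by $N$ coincide with those visited by $M$ for as long as $M$'s run has not yet entered $p$, and at the first step where $M$'s run enters $p$ the run of $N$ sits in $q$ instead. Concretely, if $u$ is the shortest prefix of $x$ with $\underline\delta(q_0, u) = p$ (so $u \in L(M, p)$, the first visit to~$p$), then $\underline\mu(p_0, u) = q$ while $\underline\delta(q_0, u) = p$, and $M$ and $N$ agree on every proper prefix of~$u$. The degenerate case $q_0 = p$ (first visit already at $u = \varepsilon$) and the trivial case $p = q$ (where $N = M$ and both sides are empty) are checked separately.

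With the simulation lemma in hand I would prove the two inclusions by decomposing $x$ at the first visit to~$p$. For $L(M) \simdiff L(N) \subseteq \{uw \mid \dotsc\}$: if $M$ and $N$ disagree on~$x$, their runs must diverge, which by the simulation lemma forces $M$'s run to reach~$p$; writing $x = uw$ with $u$ the first-visit prefix, $M$ continues its computation from~$p$ and $N$ from~$q$, so the disagreement on~$x$ is exactly a disagreement of the two $w$-continuations, yielding $w \in L(p, M) \simdiff L(q, M)$. The reverse inclusion runs the same trace backwards, starting from a given factorization $uw$ with $u \in L(M,p)$ and $w \in L(p,M) \simdiff L(q,M)$.

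The step I expect to be the main obstacle is controlling the behaviour of $N$ \emph{after} the divergence: once $N$ reaches~$q$ it continues under~$\mu$, not under~$\delta$, so $\underline\mu(q, w)$ need not match $\underline\delta(q, w)$ as soon as the $\delta$-run from~$q$ revisits~$p$ and triggers a further redirection. Relating $N$'s tail behaviour back to $L(q, M)$ is therefore the delicate point on which the whole argument hinges. I would address it by inducting on $\abs x$ (equivalently, on the number of visits that $M$'s run makes to~$p$), peeling off the input at each re-entry into~$p$ and reapplying the simulation lemma to the remaining suffix, and verifying that the redirections incurred along the way contribute only strings already of the claimed form $uw$; making this bookkeeping close cleanly is, I expect, where the real work of the proof lies.
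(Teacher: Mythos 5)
The paper states this lemma without proof, so there is no official argument to compare yours against; your outline has to stand on its own. Its first half does stand: the factorization $\mu(q',\sigma)=\rho(\delta(q',\sigma))$ and the induction showing that the runs of $M$ and $N$ agree until $M$ first enters~$p$, at which point $N$ sits in~$q$, are correct and are the right way to start. You have also put your finger on exactly the critical point: after the divergence, $x=uw$ lies in $L(M)\simdiff L(N)$ if and only if $w\in L(p,M)\simdiff L(q,N)$, whereas the lemma speaks of $L(p,M)\simdiff L(q,M)$, and these differ as soon as the $\delta$-run from~$q$ can re-enter~$p$. But you leave this step as hoped-for bookkeeping, and that is a genuine gap --- indeed an unclosable one, because the stated identity fails in precisely that situation. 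Take $\Sigma=\{a\}$, $Q=\{0,q,p,r\}$, $q_0=0$, $\delta(0,a)=q$, $\delta(q,a)=p$, $\delta(p,a)=r$, $\delta(r,a)=r$, and $F=\{p\}$. This $M$ is minimal, $p\sim q$, and $p\in\Pre(M)$. For $N=\merge_M(p\to q)$ the redirected transition turns $q$ into a non-final sink and the only final state is deleted, so $L(N)=\emptyset$ while $L(M)=\{aa\}$, giving $L(M)\simdiff L(N)=\{aa\}$; but $L(M,p)=\{aa\}$ and $L(p,M)\simdiff L(q,M)=\{\varepsilon,a\}$, so the right-hand side of the lemma is $\{aa,aaa\}$.

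What makes the identity true is an additional hypothesis that the statement omits but that holds in the paper's principal applications (merging a preamble state into a kernel state): $p$ must not be $\delta$-reachable from~$q$. Under that assumption the $\mu$-run from~$q$ coincides with the $\delta$-run from~$q$, hence $L(q,N)=L(q,M)$, and your two inclusions go through. One further needs that $p$ is visited at most once on any run of~$M$ (automatic for a reachable $p\in\Pre(M)$, which otherwise would lie on a reachable cycle), so that the factorization $x=uw$ with $u\in L(M,p)$ a prefix of~$x$ is unique and the existential quantifier over~$u$ on the right-hand side cannot produce spurious witnesses. So rather than trying to make the induction on the number of re-entries into~$p$ close --- it will not --- you should add the missing reachability hypothesis, after which your argument is essentially complete.
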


Consequently, $M$~and~$\merge_M(p \to q)$ are almost-equivalent if $q
\sim p$ and $p \in \Pre(M)$.  The hyper-minimization algorithms
of~\cite{badgefshi07,bad09,gawjez09,holmal10} only perform such
merges.  More precisely, the procedure \textsc{MergeStates} merges
almost-equivalent states in the mentioned fashion until the obtained
\textsc{dfa} is hyper-minimal.  The number of errors introduced in
this way differs among several hyper-minimal \textsc{dfa} for~$M$ and
depends on the merges performed.  In this contribution, we develop an
algorithm that computes a hyper-minimal \textsc{dfa} for~$M$ that
commits the minimal number of errors among all hyper-minimal
\textsc{dfa}s for~$M$.  A \textsc{dfa}~$N$ is \emph{hyper-optimal
  for~$M$} if it is hyper-minimal and the cardinality of the symmetric
difference between $L(M)$~and~$L(N)$ is minimal among all
hyper-minimal \textsc{dfa}s.  Note that a hyper-optimal \textsc{dfa}
for~$M$ is hyper-minimal for~$M$.  Moreover, our algorithm returns the
exact number of errors, and we could also return a compact
representation of the actual error strings.  Overall, we thus solve a
problem that remained open in~\cite{badgefshi07}.

\begin{figure}[t]
  \centering
  \includegraphics[scale=1]{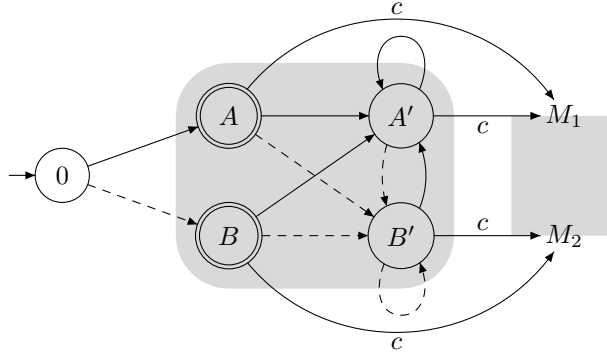}
  \caption{An example \textsc{dfa}, where unbroken lines are
    $a$-transitions and dashed lines are $b$-transitions.} 
  \label{fig:Original}
\end{figure}

An extreme example is presented in Fig.~\ref{fig:Original}.  If we
run the hyper-minimization algorithms
of~\cite{badgefshi07,bad09,gawjez09,holmal10}, then we obtain
one of the two first \textsc{dfa}s of Fig.~\ref{fig:Mini}.  Both of
them commit $2 + \abs{L(M_1) \simdiff L(M_2)}$~errors.  If we let
$L(M_1) = \Sigma^k$ for some $k \in \nat$ and $L(M_2) = \emptyset$,
then they commit $2 + \abs \Sigma^k$~errors.  On the other hand, the
optimal \textsc{dfa} is the third \textsc{dfa} of
Fig.~\ref{fig:Mini}, and it commits only $2$~errors (irrespective of
$M_1$~and~$M_2$).  This shows that the gap in the number of errors can
be very significant.

\begin{figure}[t]
  \centering
  \includegraphics[scale=0.85]{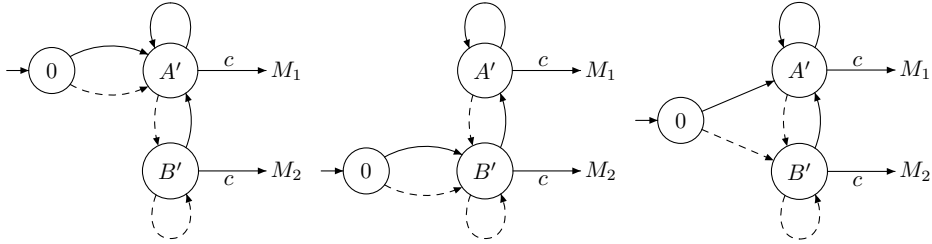} 
  \caption{Three hyper-minimal \textsc{dfa}s for the \textsc{dfa} of
    \protect{Fig.~\ref{fig:Original}}, where unbroken lines are
    $a$-transitions and dashed lines are $b$-transitions.}
  \label{fig:Mini}
\end{figure}

\section{Computing the number of errors}
\label{sec:Errors}
Next, we show how to efficiently compute the number of errors that are
caused by a single merge (see Lemma~\ref{lm:MergeErrors}).  For this
we first compute the size of the difference between almost-equivalent
states $p \sim q$.  From now on, let $M = (Q, \Sigma, q_0, \delta, F)$
be a minimal \textsc{dfa}.  In our examples, we will always refer to
our running example \textsc{dfa}~$M_{\text{ex}}$, which is presented
in Fig.~\ref{fig:Ex}.  Its kernel states are $\Ker(M_{\text{ex}}) =
\{ E, F, I, J, K, L, M\}$ and the following partition represents its
almost-equivalence:
\[ \{ 0 \} \quad \{ A \} \quad \{ B \} \quad \{ C, D \} \quad \{ E \}
\quad \{ F \} \quad \{ G, H, I, J \} \quad \{ K, L, M \}. \] In
comparison to the \textsc{dfa}~$M_{\text{ex}}$ of Fig.~\ref{fig:Ex},
the \textsc{dfa}~$N_{\text{ex}}$ of Fig.~\ref{fig:Ex} commits the
following seven errors: $\{ aaaab, aaab, aab, aabab, aabb, abab, abb \}$.
Note that existing
algorithms will only find hyper-minimal \textsc{dfa}s that commit
$16$~errors, and the worst hyper-minimal \textsc{dfa}
for~$M_{\text{ex}}$ commits $29$~errors.

\begin{figure}[t]
  \centering
  \includegraphics[scale=0.85]{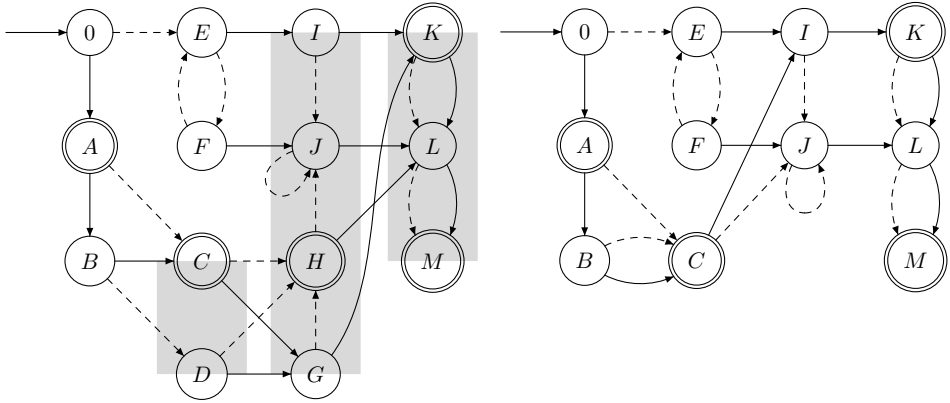} 
  \caption{Example \textsc{dfa}~$M_{\text{ex}}$ (left) and optimal
    hyper-minimal \textsc{dfa}~$N_{\text{ex}}$ (right)
    for~$M_{\text{ex}}$, where unbroken lines are $a$-transitions and
    dashed lines are $b$-transitions.}
  \label{fig:Ex}
\end{figure}

\begin{definition}
  \label{df:EMatrix}
  For every $q \sim p$, let
  \[ E_{q, p} =
  \begin{cases}
    0 & \text{if } q = p \\
    \sum_{\sigma \in \Sigma} E_{\delta(q, \sigma), \delta(p, \sigma)}
    +
    \begin{cases}
      0 & \text{if } q \in F \iff p \in F \\
      1 & \text{otherwise}
    \end{cases}
    & \text{otherwise.}
  \end{cases}
  \]
\end{definition}

\begin{lemma}
  \label{lm:EMatrix}
  $E_{q, p} = \abs{L(q, M) \simdiff L(p, M)}$ for every $q \sim p$.
\end{lemma}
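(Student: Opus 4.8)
The plan is to prove the identity $E_{q,p} = \abs{L(q,M) \symd L(p,M)}$ by induction on the recursive structure of the definition of $E_{q,p}$, using the fact that $q \sim p$ guarantees the symmetric difference $L(q,M) \symd L(p,M)$ is finite (so both sides are well-defined natural numbers). The key observation is that $L(q,M)$ decomposes according to the first symbol read: a nonempty string $\sigma w$ lies in $L(q,M)$ exactly when $w \in L(\delta(q,\sigma), M)$, while $\varepsilon \in L(q,M)$ exactly when $q \in F$. I want to turn this decomposition into a corresponding decomposition of the symmetric difference.

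**First I would** record the membership characterization: for any state $r$, we have $\varepsilon \in L(r,M) \iff r \in F$, and $\sigma w \in L(r,M) \iff w \in L(\delta(r,\sigma), M)$ for every $\sigma \in \Sigma$ and $w \in \Sigma^*$. This follows directly from the definition of $\underline\delta$ and $L(r,M)$. From this, the symmetric difference splits as a disjoint union over the first symbol (together with the empty-string case):
\[
L(q,M) \symd L(p,M) = \bigl( \{\varepsilon\} \cap (F \symd_{q,p} F) \bigr) \;\cup\; \bigsqcup_{\sigma \in \Sigma} \sigma \bigl( L(\delta(q,\sigma),M) \symd L(\delta(p,\sigma),M) \bigr),
\]
where the first term contributes $\varepsilon$ precisely when exactly one of $q,p$ is final (contributing $1$ to the count when $q \in F \not\Leftrightarrow p \in F$, and $0$ otherwise), and the sets $\sigma L_\sigma$ for distinct $\sigma$ are disjoint because they begin with distinct symbols. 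Taking cardinalities and using that each map $w \mapsto \sigma w$ is injective, I obtain
\[
\abs{L(q,M) \symd L(p,M)} = \Bigl[ q \in F \not\Leftrightarrow p \in F \Bigr] + \sum_{\sigma \in \Sigma} \abs{L(\delta(q,\sigma),M) \symd L(\delta(p,\sigma),M)},
\]
which matches Definition~\ref{df:EMatrix} once the inductive hypothesis is applied to each summand.

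**For the induction to go through**, I need a well-founded measure. The natural choice is to induct on $\abs{L(q,M) \symd L(p,M)}$ itself, or equivalently on the length of the longest string in the symmetric difference: since $q \sim p$, the set $L(q,M) \symd L(p,M)$ is finite, and for each $\sigma$ the set $L(\delta(q,\sigma),M) \symd L(\delta(p,\sigma),M)$ is (in bijection with) a subset of it via $w \mapsto \sigma w$, hence no larger and strictly smaller whenever the $\varepsilon$-term is nonzero or when the string lengths genuinely drop. I also need $\delta(q,\sigma) \sim \delta(p,\sigma)$ so that $E_{\delta(q,\sigma),\delta(p,\sigma)}$ is defined and the hypothesis applies; this is exactly the statement that $\sim$ is a congruence, which the preliminaries already established. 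The base case $q = p$ is immediate, since then the symmetric difference is empty and $E_{q,q} = 0$.

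**The main obstacle** will be justifying the strict decrease of the well-founded measure so that the recursion terminates and the induction is valid. The subtlety is that $\delta(q,\sigma) \symd \delta(p,\sigma)$ need not be a \emph{proper} subset under the shift map in general—for instance if $q,p$ have identical finality and the symmetric difference is ``carried'' entirely through a single transition. I would resolve this by inducting on the length of the longest string in $L(q,M) \symd L(p,M)$ (which is finite and strictly decreases under the $w \mapsto \sigma w$ correspondence, since removing the leading $\sigma$ shortens every string), rather than on cardinality directly; this makes the termination argument clean and handles the congruence-preserving descent without circularity.
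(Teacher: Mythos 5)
Your proposal is correct and follows essentially the same route as the paper: decompose $L(q,M) \simdiff L(p,M)$ by the first symbol read and induct on the length of the longest string in the (finite) symmetric difference, which is exactly the paper's measure $k_{q,p} \in \nat \cup \{-\infty\}$. The one point you gloss over is the base case: the bottom of your induction is ``symmetric difference empty,'' not ``$q = p$,'' and to conclude $q = p$ there (so that the first clause of Definition~\ref{df:EMatrix} applies and the recursion terminates rather than looping on pairs of distinct states with identical right languages) you must invoke the minimality of~$M$, as the paper does explicitly.
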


\begin{proof}
  Let $q \sim p$.  Then $\abs{L(q, M) \simdiff L(p, M)}$ is finite by
  definition, and we let $k_{q,p} = \max\ \{ \abs w \mid w \in L(q, M)
  \simdiff L(p, M) \}$, where $\max\ \emptyset = -\infty$.  Now, we
  prove the statement by induction on~$\nat \cup \{-\infty\}$.  First,
  suppose that $k_{q, p} = -\infty$.  Then $L(q, M) = L(p, M)$, which
  yields that $q \equiv p$.  Since $M$~is minimal, we conclude that $q
  = p$ and $E_{q, p} = 0$, which proves the induction base.  Second,
  suppose that $k_{q,p} \geq 0$, and let $W = \{ \sigma w \mid \sigma
  \in \Sigma, w \in L(\delta(q, \sigma), M) \simdiff L(\delta(p,
  \sigma), M) \}$. Obviously, $W \subseteq L(q, M) \simdiff L(p, M)
  \subseteq W \cup \{\varepsilon\}$ and $k_{\delta(q, \sigma),
    \delta(p, \sigma)} < k_{q,p}$ for every $\sigma \in \Sigma$.  The
  empty string~$\varepsilon$ is in $L(q, M) \simdiff L(p, M)$ if and
  only if $q$~and~$p$ differ on finality.  Moreover, $E_{\delta(q,
    \sigma), \delta(p, \sigma)} = \abs{L(\delta(q, \sigma), M)
    \simdiff L(\delta(p, \sigma), M)}$ for every $\sigma \in \Sigma$
  by induction hypothesis.  Since $k_{q, p} \geq 0$, we have $q \neq
  p$ and
  \[ E_{q, p} = \sum_{\sigma \in \Sigma} E_{\delta(q, \sigma),
    \delta(p, \sigma)} +
  \begin{cases}
    0 & \text{if } q \in F \iff p \in F \\
    1 & \text{otherwise,}
  \end{cases}
  \]
  which proves the induction step and the statement.
\end{proof}

\begin{algorithm}[t]
  \begin{algorithmic}[2]
    \REQUIRE minimal \textsc{dfa} $M = (Q, \Sigma, q_0, \delta, F)$
    and states~$q \sim p$
    \ENSURE error matrix $E \in \integer^{Q \times Q}$ initially $0$
    on the diagonal and $-1$ elsewhere \smallskip
    \IF{$E_{q, p} = -1$}
      \STATE $c \gets ((q \in F) \text{ xor } (p \in F))$
        \COMMENT{set errors to~$1$ if $q$~and~$p$ differ on finality}
        \smallskip
      \STATE $\displaystyle E_{q, p} \gets c + \sum_{\sigma \in \Sigma}
        \textsc{CompE}(M, \delta(q, \sigma), \delta(p, \sigma))$
        \COMMENT{recursive calls}
    \ENDIF
    \RETURN $E_{q, p}$
      \COMMENT{return the computed value}
  \end{algorithmic}
  \caption{\textsc{CompE}: Compute the error matrix~$E$.}
  \label{alg:Errors}
\end{algorithm}

Let us illustrate Algorithm~\ref{alg:Errors} on the example
\textsc{dfa}~$M_{\text{ex}}$ of Fig.~\ref{fig:Ex}.  We list some error
matrix entries together with the corresponding error strings.  Note
that the error strings are not computed by the algorithm, but are
presented for illustrative purposes only.
\begin{alignat*}{4}
  E_{G, H} &= 5 \quad \{\varepsilon, a, aa, ab, b \} &\qquad
  E_{H, I} &= 4 \quad \{\varepsilon, a, aa, ab \} &\qquad
  E_{K, L} &= 3 \quad \{\varepsilon, a, b \}  \\  
  E_{G, I} &= 1 \quad \{b \} &\qquad
  E_{H, J} &= 1 \quad \{\varepsilon \} &\qquad
  E_{K, M} &= 2 \quad \{a, b \} \\
  E_{G, J} &= 4 \quad \{a, aa, ab, b \}  &\qquad
  E_{I, J} &= 3 \quad \{a, aa, ab \} &\qquad
  E_{L, M} &= 1 \quad \{\varepsilon \} &\qquad
\end{alignat*}
  
\begin{theorem}
  \label{thm:Errors}
  Algorithm~\ref{alg:Errors} can be used to compute all~$E_{q, p}$
  with $q \sim p$ in time~$O(mn)$. 
\end{theorem}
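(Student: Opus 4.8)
The plan is to analyze Algorithm~\ref{alg:Errors}, which is a memoized recursive computation of the error matrix~$E$. The correctness of the values computed follows immediately from Lemma~\ref{lm:EMatrix} and Definition~\ref{df:EMatrix}: the recursion of the algorithm mirrors exactly the recurrence defining~$E_{q,p}$, and the memoization guard ``\textbf{if} $E_{q,p} = -1$'' ensures that each entry is computed only once. So the real content of the theorem is the running-time bound of~$O(mn)$, where $m = \abs{Q \times \Sigma}$ and $n = \abs Q$; that is where I would concentrate the argument.

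First I would observe that the matrix~$E$ has $\abs{Q}^2 = n^2$ entries, but we only ever fill entries~$E_{q,p}$ with $q \sim p$, and more importantly we only need to invoke the algorithm for the almost-equivalent pairs that the overall hyper-minimization actually uses. The key structural fact I would invoke is that the almost-equivalence~$\sim$ is a congruence (stated in the Preliminaries), so that $q \sim p$ implies $\delta(q,\sigma) \sim \delta(p,\sigma)$ for every~$\sigma$; this guarantees that every recursive call $\textsc{CompE}(M, \delta(q,\sigma), \delta(p,\sigma))$ again refers to an almost-equivalent pair, so the recursion stays within the well-defined part of~$E$ and terminates (by the decreasing parameter~$k_{q,p}$ used in the proof of Lemma~\ref{lm:EMatrix}).

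The main obstacle, and the heart of the running-time analysis, is to bound the total work over all calls. Because of memoization, each matrix entry~$E_{q,p}$ is computed (its \emph{then}-branch executed) at most once, and computing one entry costs~$O(\abs\Sigma)$ time for the sum over the $\abs\Sigma$ symbols plus the finality comparison. I would therefore bound the number of \emph{distinct} entries that get filled. The cleanest bound comes from noting that within a single almost-equivalence class~$B$ of size~$b$, at most $b^2$ pairs can occur, each costing~$O(\abs\Sigma)$; summing $\sum_B \abs\Sigma\,\abs B^2$ over all classes and crudely bounding one factor of~$\abs B$ by~$n$ gives $\abs\Sigma \cdot n \cdot \sum_B \abs B = \abs\Sigma \cdot n \cdot n = O(mn)$, since $\sum_B \abs B = n$ and $m = \abs\Sigma\, n$. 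Thus the total cost of filling all reachable entries is~$O(mn)$.

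Finally I would account for the recursive-call overhead itself: even when an entry is already computed, the call returns in~$O(1)$ after the guard check, but such ``cache-hit'' calls are triggered only as the $\abs\Sigma$ children of some entry that \emph{is} being computed. Hence the number of cache-hit calls is at most $\abs\Sigma$ times the number of filled entries, which is again~$O(mn)$, and each costs~$O(1)$. Combining the two contributions yields the claimed $O(mn)$ bound. I expect the delicate point to be justifying that one only ever needs the entries within almost-equivalence classes (so the naive $n^2$ entries times $O(\abs\Sigma)$, which would also give $O(mn)$, is not even required), but the congruence property makes this routine; the genuinely careful step is the charging argument that avoids double-counting the recursive calls.
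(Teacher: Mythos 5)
Your proposal is correct and follows essentially the same route as the paper: memoization guarantees each entry $E_{q,p}$ is filled at most once at cost $O(\abs\Sigma)$, and the at most $n^2$ entries give $O(\abs\Sigma \cdot n^2) = O(mn)$, which is exactly the paper's (more tersely stated) argument. Your additional observations --- that the congruence property of $\sim$ keeps the recursion inside the domain of Definition~\ref{df:EMatrix}, and the explicit charging of cache-hit calls to their parent entries --- are correct refinements the paper leaves implicit, but they do not change the approach.
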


\begin{proof}
  Clearly, the initialization and the recursion for~$E_{q, p}$ are
  straightforward implementations of its definition
  (see~Definition~\ref{df:EMatrix}).  Moreover, each individual call
  takes only time~$O(\abs \Sigma)$ besides the time taken for the
  recursive calls.  Since each call computes one entry in the matrix
  and no entry is ever recomputed, we obtain the time complexity
  $O(\abs \Sigma \cdot n^2) = O(mn)$ because $m = \abs \Sigma \cdot
  n$.
\end{proof}

In addition, we need to compute the number of strings that lead to a
preamble state (see Lemma~\ref{lm:MergeErrors}).  This can
easily be achieved with a folklore algorithm (see
Algorithm~\ref{alg:NoPaths} and Lemma~4 of~\cite{epp95}) that computes
the number of paths from~$q_0$ to each preamble state.  Mind that the
graph of the \textsc{dfa}~$M$ restricted to its preamble
states~$\Pre(M)$ is acyclic.  Overall, the algorithm is very similar
to Algorithm~\ref{alg:Errors}, but we will not present a formal
comparison here.

\begin{algorithm}[t]
  \begin{algorithmic}[2]
    \REQUIRE a minimal \textsc{dfa}~$M = (Q, \Sigma, q_0, \delta, F)$ and
    a preamble state~$q \in \Pre(M)$
    \ENSURE access path vector $w \in \nat^Q$ initially $1$ at~$q_0$
    and $0$ elsewhere \smallskip
    \IF{$w_q = 0$}
      \STATE $w_q \gets \displaystyle\sum_{(p, \sigma) \in
        \delta^{-1}(q)} \textsc{CompAccess}(M, p)$
        \COMMENT{recursive calls}
    \ENDIF
    \RETURN $w_q$
      \COMMENT{return the computed value}
  \end{algorithmic}
  \caption{\protect{\textsc{CompAccess:} Compute the number of paths
      to a preamble state.}}
  \label{alg:NoPaths}
\end{algorithm}

\begin{theorem}[see~{\protect{\cite{epp95}}}]
  \label{thm:NoPaths}
  Algorithm~\ref{alg:NoPaths} can be used to compute the number of
  paths to each preamble state in time~$O(m)$.
\end{theorem}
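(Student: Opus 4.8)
The plan is to establish that \textsc{CompAccess} correctly computes the number of paths from~$q_0$ to each preamble state and runs in time~$O(m)$. The correctness rests on the recurrence $w_q = \sum_{(p, \sigma) \in \delta^{-1}(q)} w_p$ for every~$q \neq q_0$, together with the base case $w_{q_0} = 1$ (where the empty path is the sole path to~$q_0$). This recurrence is the standard dynamic-programming identity for counting paths in a directed graph: every path to~$q$ decomposes uniquely as a path to some predecessor~$p$ followed by a single final transition $\delta(p, \sigma) = q$. First I would verify that this recurrence is exactly what Algorithm~\ref{alg:NoPaths} implements, treating the stored vector~$w$ as a memoization table initialized to~$1$ at~$q_0$ and~$0$ elsewhere, so that the guard $w_q = 0$ distinguishes an uncomputed entry from a computed one.

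The crucial structural observation, already flagged in the surrounding text, is that the subgraph of~$M$ induced by the preamble states~$\Pre(M)$ is acyclic. Indeed, a cycle through a preamble state would yield infinitely many strings reaching it, contradicting the finiteness of~$L(M, q)$ that defines membership in~$\Pre(M)$. Acyclicity guarantees that the recursion terminates: each recursive call on~$q$ triggers calls only on strict predecessors in a finite acyclic order, so there is no infinite descent. It also guarantees that the computed values are well-defined finite integers rather than divergent sums. I would therefore argue termination and correctness simultaneously by induction along a topological order of~$\Pre(M)$: the base case handles~$q_0$, and the inductive step invokes the hypothesis on each predecessor~$p$ (which precedes~$q$ topologically) before summing.

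For the running-time bound, the key point is memoization. Once $w_q$ is computed it is stored, and the guard $w_q = 0$ prevents any recomputation; hence each state's value is computed exactly once. The work done directly at~$q$ (excluding nested recursive calls) is proportional to the number of incoming transitions $\abs{\delta^{-1}(q)}$, since we iterate over the predecessors to form the sum. Summing over all states, $\sum_{q} \abs{\delta^{-1}(q)} = \abs{Q \times \Sigma} = m$, because every transition contributes to the in-degree of exactly one target state. Thus the total work is~$O(m)$, as claimed.

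I expect the main subtlety to lie not in the arithmetic but in the interaction between the guard condition and the intended restriction to preamble states. Strictly speaking the recurrence sums over \emph{all} predecessors~$p$, but transitions originating in kernel states never land in a preamble state (a preamble state reachable from a kernel state would inherit infinitely many access strings), so the predecessors of any~$q \in \Pre(M)$ lie entirely within~$\Pre(M) \cup \{q_0\}$ and the recursion never escapes the acyclic region. One should also note the minor caveat that using $w_q = 0$ as the "uncomputed" sentinel is sound precisely because a genuine count of paths to a reachable preamble state is at least~$1$; a state with a true path count of~$0$ is unreachable and simply never queried. Addressing these boundary conditions cleanly is the only real obstacle; the core counting and complexity arguments are the folklore result cited from~\cite{epp95}.
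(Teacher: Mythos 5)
Your proposal is correct and follows essentially the same route as the paper's (much terser) proof: the same key observation that every predecessor of a preamble state is itself a preamble state, the acyclicity of the preamble subgraph noted in the surrounding text, and the memoization argument that each transition is examined at most once to obtain the $O(m)$ bound. You merely spell out the induction along a topological order and the boundary conditions that the paper declares obvious.
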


\begin{proof}
  The correctness is obvious using the observation that $p$~is a
  preamble state for every $(p, \sigma) \in \delta^{-1}(q)$ with $q
  \in \Pre(M)$.  Clearly, the call~$\textsc{CompAccess}(M, q)$
  terminates in constant time if the value~$w_q$ has already been
  computed.  Moreover, each transition can be considered at most once
  in the sum in line~2, which yields the time complexity~$O(m)$. 
\end{proof}

Algorithm~\ref{alg:NoPaths} computes the following values for the
\textsc{dfa}~$M_{\text{ex}}$ of Fig.~\ref{fig:Ex}:
\begin{alignat*}{4}
  w_0 = w_A = w_B = w_D &= 1 \quad & w_C &= 2 \quad &
  w_G &= 3 \quad & w_H &= 6 \enspace.
\end{alignat*}

Overall, we can now efficiently compute the number of errors (or a
representation of the errors itself) caused by a single merge
operation.  However, multiple merges may affect each other.  An error
that is introduced by one merge might be removed by a subsequent
merge, so that we cannot simply obtain the exact error count by adding
the error counts for all performed merges.

\section{Optimal state merging}
\label{sec:Merge}
The previous section suggests how to compute a hyper-optimal
\textsc{dfa} for a given minimal \textsc{dfa}~$M = (Q, \Sigma, q_0,
\delta, F)$ with $m = \abs{Q \times \Sigma}$ and $n = \abs Q$.  We can
simply compute the exact set of errors for each hyper-minimal
\textsc{dfa} for~$M$ and select a \textsc{dfa} with a minimal error
count.  By Theorem~\ref{thm:Struc} we can easily enumerate all
hyper-minimal \textsc{dfa}s for~$M$, so that the above procedure would
be effective.  However, in this section, we show that we can also
obtain a hyper-optimal \textsc{dfa} using only local decisions.  This
is possible since the structural differences among hyper-minimal
\textsc{dfa}s for~$M$ mentioned in Theorem~\ref{thm:Struc} cause
different errors.  Roughly speaking, Theorem~\ref{thm:Struc} shows
that two hyper-minimal \textsc{dfa}s for~$M$ can only differ on
\begin{itemize}
\item the initial state,
\item finality of preamble states, and
\item transitions from preamble to kernel states.
\end{itemize}

Now, let us identify the strings and potential errors associated with
each of the three differences.  Recall that $\sim$~is the
almost-equivalence relating the states of~$M$.  To simplify the
following discussion, we introduce some additional notation.  For
every $q \in Q$, let $K_q = \{ p \in \Ker(M) \mid p \sim q\}$.  In
other words, the set~$K_q$ contains all kernel states that are
almost-equivalent to the state~$q$.  Moreover, let $P_\sim = \{ B \in
[Q]_\sim \mid B \subseteq \Pre(M) \}$ be the set of blocks of
almost-equivalent and exclusively preamble states.  Now we define sets
of strings that correspond to the three types of differences mentioned
above:
\begin{itemize}
\item Let $W_0 = \bigcup_{q \in K_{q_0}} \Sigma^*$.
\item Let $W_B = \bigcup_{q \in B} L(M, q)$ for every $B \in P_\sim$.
\item For every $B \in P_\sim$ and $\sigma \in \Sigma$ with
  $\bigcup_{q \in B} K_{\delta(q, \sigma)} \neq \emptyset$, let
  \[ W_{B, \sigma} = \{ u\sigma w \mid u \in W_B, w \in \Sigma^* \}
  \enspace. \]
\end{itemize}

\begin{lemma}
  \label{lm:Part}
  The following is a weak partition of~$\Sigma^*$:
  \begin{align*}
    \{W_0\} &\cup \{ W_B \mid B \in P_\sim \} \cup \{ W_{B,\sigma}
    \mid B \in P_\sim, \sigma \in \Sigma, \bigcup_{q \in B}
    K_{\delta(q, \sigma)} \neq \emptyset \} \enspace.
  \end{align*}
\end{lemma}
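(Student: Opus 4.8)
The plan is to classify each string by the point at which its run first leaves the ``preamble part'' of the quotient structure, and to read off from this classification the unique block of the proposed collection that contains it. The organizing notion is a dichotomy on states: call a state~$q$ \emph{committed} if $K_q \neq \emptyset$ (its $\sim$-block contains a kernel state) and \emph{free} otherwise; by definition the free states are exactly those lying in some block $B \in P_\sim$. The single structural fact I would establish first is that committedness is closed under transitions: if $q \sim p$ with $p \in \Ker(M)$, then $\delta(q,\sigma) \sim \delta(p,\sigma)$ because $\sim$ is a congruence, and $\delta(p,\sigma) \in \Ker(M)$ because appending~$\sigma$ to the infinitely many access strings of~$p$ keeps $L(M,\delta(p,\sigma))$ infinite. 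Hence on any run $r_0 = q_0, r_1, \dotsc, r_{\abs w}$ with $r_i = \underline\delta(q_0, w_1 \cdots w_i)$, once a committed state appears every later state is committed; equivalently, if $r_{\abs w}$ is free then \emph{every}~$r_i$ is free.

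Two regimes then arise. If $q_0$ is committed, i.e.\ $K_{q_0} \neq \emptyset$, forward-closure makes every reachable state committed, so $P_\sim = \emptyset$ (using that every state of a minimal \textsc{dfa} is reachable) and the collection degenerates to $\{W_0\} = \{\Sigma^*\}$, which is trivially a weak partition. Otherwise $q_0$ is free, whence $W_0 = \emptyset$, and I would verify covering and disjointness for the remaining blocks.

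For covering, take any~$w$ and inspect its run. If the run stays free throughout, then $r_{\abs w}$ lies in a block $B = [r_{\abs w}]_\sim \in P_\sim$ and $w \in L(M, r_{\abs w}) \subseteq W_B$. Otherwise let $j \geq 1$ be the least index with $r_j$ committed; then $r_{j-1}$ is free, so $B = [r_{j-1}]_\sim \in P_\sim$, and for $\sigma = w_j$ the state $\delta(r_{j-1},\sigma) = r_j$ is committed, so the defining condition $\bigcup_{q \in B} K_{\delta(q,\sigma)} \neq \emptyset$ holds and $W_{B,\sigma}$ is present in the collection. Writing $u = w_1 \cdots w_{j-1}$ then gives $u \in W_B$ and $w \in W_{B,\sigma}$.

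For disjointness I would use that each string determines a unique final state, so $W_B \cap W_{B'} = \emptyset$ for $B \neq B'$. The crux, and the step I expect to be the main obstacle, is disjointness of the family $\{W_{B,\sigma}\}$: I must show the factorization $w = u\sigma w'$ with $u \in W_B$ is forced. Here forward-closure is essential—since $u \in W_B$ reaches a free state, the whole run of~$u$ is free, so $\abs u + 1$ is exactly the first commit index of~$w$; this index, the block $B = [\underline\delta(q_0,u)]_\sim$, and the symbol~$\sigma$ are thus recovered from~$w$ alone, so no other $W_{B',\sigma'}$ can contain~$w$. Finally, a string in some $W_{B,\sigma}$ has a run reaching a committed state whereas a string in some $W_B$ has an entirely free run, so these two families are disjoint as well; together with $W_0 = \emptyset$ this gives pairwise disjointness and completes the weak-partition check.
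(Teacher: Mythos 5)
Your proof is correct and follows essentially the same route as the paper's: a case split on whether $K_{q_0}$ is empty, covering via the first position at which the run enters a $\sim$-class containing a kernel state, and pairwise disjointness via the uniqueness of that decomposition point. Your explicit ``committed/free'' dichotomy with its forward-closure property is in fact a slightly more careful rendering of what the paper phrases loosely as the unique ``transition from a preamble state to a kernel state'' (the transition target defining $W_{B,\sigma}$ need only be \emph{almost-equivalent} to a kernel state, which your formulation captures correctly).
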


\begin{proof}
  Clearly, $W_0 = \Sigma^*$ if $K_{q_0} \neq \emptyset$ or $W_0 =
  \emptyset$ otherwise. Suppose the former; i.e., there exists $q \in
  K_{q_0}$.  Let $p \in \Pre(M)$ be a preamble state.  Since $M$~is
  minimal, there exists a string $w \in L(M, p)$.  Moreover, $p =
  \underline\delta(q_0, w) \sim \underline\delta(q, w)$ because $q_0
  \sim q$ and $\sim$~is a congruence.  Clearly, $\underline\delta(q,
  w)$ is a kernel state due to the fact that $q$~is a kernel state.
  Consequently, every preamble state $p \in \Pre(M)$ is
  almost-equivalent to some kernel state, which proves that $[p]_\sim
  \notin P_\sim$ for every $p \in \Pre(M)$.  This yields that
  the statement is correct if $K_{q_0} \neq \emptyset$.  

  In the second case, let $K_{q_0} = \emptyset$.  Then $W_0 =
  \emptyset$.  Clearly, $W_{B_1} \cap W_{B_2} = \emptyset$ for all
  different $B_1, B_2 \in P_\sim$ because $\{ L(M, q) \mid q \in Q\}$
  is a partition of~$\Sigma^*$.  Using the same reasoning, we can show
  that $W_{B_1}$~and~$W_{B_2, \sigma}$ are disjoint for all $B_1, B_2
  \in P_\sim$ and suitable $\sigma \in \Sigma$ using the additional
  observation that $K_{\underline\delta(q_0, w)} \neq \emptyset$ for
  every $w \in W_{B_2, \sigma}$, whereas $K_{\underline\delta(q_0, w)}
  = \emptyset$ for every $w \in W_{B_1}$.  Finally, let $B_1, B_2 \in
  P_\sim$ and suitable $\sigma_1, \sigma_2 \in \Sigma$.  Suppose that
  there exists $w \in W_{B_1, \sigma_1} \cap W_{B_2, \sigma_2}$.  When
  processing~$w$ by~$M$ there can only be one transition from a
  preamble state to a kernel state, which in both cases has to be
  achieved by the letter $\sigma_1 = \sigma_2$.  Moreover, the state
  before taking this transition is unique, which yields that also $B_1
  = B_2$.  Consequently, we have shown that all sets are disjoint.

  It remains to prove that all of~$\Sigma^*$ is covered.  Let $w \in
  \Sigma^*$ be an arbitrary string.  If $K_{\underline\delta(q_0, w)}
  = \emptyset$, then $w \in W_{[\underline\delta(q_0, w)]_\sim}$.  On
  the other hand, let $K_{\underline\delta(q_0, w)} \neq \emptyset$.
  Then there exists a prefix~$u$ of~$w$ such that
  $K_{\underline\delta(q_0, u)} \neq \emptyset$ and
  $K_{\underline\delta(q_0, v)} = \emptyset$ for all strict
  prefixes~$v$ of~$u$.  Then $w \in W_0$ if $u = \varepsilon$ and $w
  \in W_{[\underline\delta(q_0, v)]_\sim, \sigma}$ where $u = v\sigma$
  and $\sigma \in \Sigma$.  This concludes the proof.
\end{proof}

The previous lemma shows that error strings in the mentioned
sets are independent and cover all potential errors.  For our example
\textsc{dfa}~$M_{\text{ex}}$ of Fig.~\ref{fig:Ex} we have
\[ W_0 = \emptyset \qquad W_{\{C, D\}} = \{ aaa, aab, ab \} \qquad
W_{\{C, D\}, a} = \{ uaw \mid u \in W_{\{C, D\}}, w \in \Sigma^*\}
\enspace. \] Next we address all individual differences between
hyper-minimal \textsc{dfa}s for~$M$.  We start with the initial state.

\begin{lemma}
  \label{lm:Initial}
  If $K_{q_0} \neq \emptyset$, then each hyper-minimal \textsc{dfa}
  for~$M$ is obtained by pruning $\merge_M(q_0 \to q)$ for some $q \in
  K_{q_0}$.  Moreover, it commits exactly $E_{q_0, q}$~errors.
\end{lemma}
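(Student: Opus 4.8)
The plan is to split the statement into two parts: the error count, which reduces to a direct computation, and the structural assertion that \emph{every} hyper-minimal \textsc{dfa} for~$M$ arises as such a pruned merge, where the real work lies. I would first single out the main case $q_0 \in \Pre(M)$; if instead $q_0 \in \Ker(M)$, then every state reachable from~$q_0$ is reached by infinitely many strings, so $\Pre(M) = \emptyset$, the \textsc{dfa}~$M$ is already hyper-minimal, and the relevant merge is the trivial one $q = q_0$ with $E_{q_0, q_0} = 0$.

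So assume $q_0 \in \Pre(M)$. A short pumping argument shows $L(M, q_0) = \{\varepsilon\}$ (a nonempty string returning to~$q_0$ would generate infinitely many strings reaching~$q_0$), and consequently $q_0$ has no incoming transitions at all. Hence for any $q \in K_{q_0}$ the merge $\merge_M(q_0 \to q)$ has nothing to redirect: it merely installs the kernel state~$q$ as initial state and deletes~$q_0$, and pruning then retains exactly the part of~$M$ reachable from~$q$. Since a kernel state only transitions to kernel states, this pruned automaton lives entirely inside $\Ker(M)$. The error count now falls out of Lemma~\ref{lm:MergeErrors}: with $L(M, q_0) = \{\varepsilon\}$ the difference $L(M) \simdiff L(N)$ collapses to $L(q_0, M) \simdiff L(q, M)$, whose cardinality is $E_{q_0, q}$ by Lemma~\ref{lm:EMatrix}; pruning leaves the language unchanged, settling the ``moreover'' part.

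For the structural direction, let $N = (P, \Sigma, p_0, \mu, G)$ be an arbitrary hyper-minimal \textsc{dfa} for~$M$. The key auxiliary claim I would prove is that $N$ is kernel-only, i.e.\ $\Pre(N) = \emptyset$. By the first part of Theorem~\ref{thm:Struc} (with $w = \varepsilon$) we have $q_0 \sim p_0$, and since $K_{q_0} \neq \emptyset$ we fix $q \in \Ker(M)$ with $q \sim q_0 \sim p_0$. Because $q$ is a kernel state, infinitely many strings~$w$ satisfy $\underline\delta(q_0, w) = q$; for each such~$w$, Theorem~\ref{thm:Struc} gives $\underline\mu(p_0, w) \sim q \sim p_0$, so all of these states lie in the single $\sim$-class $[p_0]$ of~$N$. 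As $[p_0]$ is finite, infinitely many strings reach it. If $[p_0]$ is a singleton, then infinitely many strings reach~$p_0$ directly; otherwise hyper-minimality of~$N$ forces every state in the class~$[p_0]$ to be a kernel state. In either case $p_0 \in \Ker(N)$, and then every reachable state of~$N$ is reached by infinitely many strings, so $\Pre(N) = \emptyset$.

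The remaining and hardest step is to match the kernel-only \textsc{dfa}~$N$ with a concrete merge. Here I would compare $N$ with a canonical hyper-minimal \textsc{dfa}~$\hat M$ for~$M$, which exists by the cited algorithms and, by the argument just given, is also kernel-only. Since the hyper-minimization only performs merges $\merge_M(p \to q)$ with $p \in \Pre(M)$, kernel states are never removed, so the kernel of~$\hat M$ coincides with $\Ker(M)$; and as $N$ and~$\hat M$ are both hyper-minimal and almost-equivalent, Theorem~\ref{thm:Struc} yields a \textsc{dfa} isomorphism between them, hence a \textsc{dfa} isomorphism of~$N$ onto the part of $\Ker(M)$ reachable from some state~$q$. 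This~$q$ satisfies $q \sim p_0 \sim q_0$, so $q \in K_{q_0}$, and by the second paragraph the reachable-from-$q$ part of~$M$ is precisely the pruned $\merge_M(q_0 \to q)$; transporting the error count through the isomorphism gives the claimed $E_{q_0, q}$ errors. I expect this identification to be the main obstacle: one must verify that, once $\Pre(N) = \emptyset$ and $P_\sim = \emptyset$ (Lemma~\ref{lm:Part}) are in force, the structural freedom permitted by Theorem~\ref{thm:Struc} really does collapse to the single choice of an initial kernel state, and that this choice ranges over all of~$K_{q_0}$.
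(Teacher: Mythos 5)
Your proof is correct and follows essentially the same route as the paper's own argument: use Theorem~\ref{thm:Struc} to show that $N$ is kernel-only and \textsc{dfa}-isomorphic to the subautomaton of~$M$ on $\Ker(M)$ with initial state some $q \in K_{q_0}$, and then read off the error count $\abs{L(q_0,M) \simdiff L(q,M)} = E_{q_0,q}$ via Lemmata \ref{lm:MergeErrors}~and~\ref{lm:EMatrix}. You supply details the paper elides — in particular the interposed canonical~$\hat M$, which is genuinely needed because the structural part of Theorem~\ref{thm:Struc} requires \emph{both} automata to be hyper-minimal while $M$ is only minimal, and the observation that $q_0 \in \Pre(M)$ has no incoming transitions so that the merge is a mere initial-state change — and the only loose end is your dismissal of the sub-case $q_0 \in \Ker(M)$ with ``$q = q_0$'': there a hyper-minimal $N$ may still select any initial state in~$K_{q_0}$, which your kernel-only argument in the later paragraphs already covers.
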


\begin{proof}
  Let $N = (P, \Sigma, p_0, \mu, G)$~be a hyper-minimal \textsc{dfa}
  for~$M$.  By Theorem~\ref{thm:Struc}, the \textsc{dfa}~$N$ consists
  of only kernel states and is isomorphic to the subautomaton of~$M$
  that is determined by~$\Ker(M)$.  Moreover, $q_0 \sim p_0$, which
  yields that $N$~is isomorphic to~$\merge_M(q_0 \to q)$ for some $q
  \in K_{q_0}$.  By Lemma~\ref{lm:MergeErrors} we have that $L(q, M) =
  L(q, N) = L(N)$ and $L(M) = L(q_0, M)$.  This yields that $L(M)
  \simdiff L(N) = L(q_0, M) \simdiff L(q, M)$, of which the size
  is~$E_{q_0, q}$ by Lemma~\ref{lm:EMatrix}.
\end{proof}

We can compute the number~$E_{q_0, q}$ of errors caused by the merge
of~$q_0$ into an almost-equivalent kernel state~$q \in K_{q_0}$ using
Algorithm~\ref{alg:Errors} of Section~\ref{sec:Errors}.  This simple
test is implemented in lines~1--2 of Algorithm~\ref{alg:OptMerge}.

Second, let us consider a block~$B \in P_\sim$ of almost-equivalent
preamble states.  Such a block must eventually be merged into a single
preamble state~$p$ in the hyper-minimal \textsc{dfa}~$N$, for which we
need to determine finality because the preamble states of two
hyper-minimal \textsc{dfa}s for~$M$ are only related by a
transition isomorphism (see Theorem~\ref{thm:Struc}). 

\begin{lemma}
  \label{lm:Finality}
  Let $B \in P_\sim$ and $N = (P, \Sigma, p_0, \mu, G)$ be a
  hyper-minimal \textsc{dfa} for~$M$.  Then $N$~commits either
  $\sum_{q \in B \cap F} w_q$~or~$\sum_{q \in B \setminus F} w_q$
  errors of~$W_B$.
\end{lemma}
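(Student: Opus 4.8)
The plan is to analyze what happens to the strings in $W_B$ when we pass from $M$ to a hyper-minimal \textsc{dfa}~$N$ for~$M$. By Theorem~\ref{thm:Struc}, the block~$B$ of almost-equivalent preamble states collapses to a single preamble state~$p$ in~$N$, and the only freedom we have is the \emph{finality} of~$p$: the transition isomorphism between $\Pre(M)$ and $\Pre(N)$ fixes everything about~$p$ except whether $p \in G$. So the errors contributed by strings in $W_B$ depend entirely on this one binary choice.

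First I would pin down precisely which strings in $W_B$ can become errors. Recall $W_B = \bigcup_{q \in B} L(M, q)$ consists of exactly the strings~$w$ with $\underline\delta(q_0, w) \in B$, i.e. those that land in the block~$B$. Under the homomorphism~$h$ of Theorem~\ref{thm:Struc}, each such~$w$ satisfies $\underline\mu(p_0, w) = p$, so $w \in L(N)$ iff $p \in G$. Meanwhile $w \in L(M)$ iff $\underline\delta(q_0, w) = q \in F$, which depends on which state $q \in B$ the string reaches. Hence for a fixed finality choice for~$p$, the string~$w$ is an error exactly when the finality of its target state in~$B$ disagrees with the finality of~$p$. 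If we choose $p$ to be final (i.e. $p \in G$), the errors among $W_B$ are precisely the strings reaching the non-final states $B \setminus F$; if we choose $p$ non-final, the errors are the strings reaching the final states $B \cap F$.

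Next I would count these errors. The number of strings reaching a fixed preamble state~$q$ is finite and equals~$w_q$, the access-path count computed by Algorithm~\ref{alg:NoPaths} (here it is essential that every $q \in B$ is a preamble state, so $w_q < \infty$). Since the sets $\{L(M,q) \mid q \in B\}$ partition~$W_B$, summing gives that the non-final choice costs $\sum_{q \in B \cap F} w_q$ errors and the final choice costs $\sum_{q \in B \setminus F} w_q$ errors. As $N$ must make one of these two choices for~$p$, it commits exactly one of these two sums, which is the claim.

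The main obstacle, and the place needing care, is justifying that the errors attributable to~$W_B$ depend \emph{only} on the finality of~$p$ and not on the other structural differences (the initial state or the preamble-to-kernel transitions) allowed by Theorem~\ref{thm:Struc}. This is exactly what Lemma~\ref{lm:Part} buys us: the sets $W_0$, the $W_B$, and the $W_{B,\sigma}$ form a weak partition of~$\Sigma^*$, so the error contributions of the three difference types are disjoint and can be analyzed independently. Concretely, for $w \in W_B$ the computation in~$N$ stays within preamble states (it never takes a preamble-to-kernel transition, since such strings live in some $W_{B',\sigma}$ instead), so the target state is genuinely~$p$ and its finality is the sole determining factor. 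With that independence established, the restriction of $L(M) \simdiff L(N)$ to $W_B$ is determined solely by the finality of~$p$, and the two cases give the two stated sums.
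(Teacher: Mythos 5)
Your proof is correct and follows essentially the same route as the paper: invoke Theorem~\ref{thm:Struc} to collapse $B$ to a single state~$p$ of~$N$ with $L(N,p)=W_B$, split $W_B$ by whether $M$ accepts (equivalently, by the finality of the target state in~$B$), and count each part as $\sum_{q\in B\cap F} w_q$ and $\sum_{q\in B\setminus F} w_q$ via the access-path counts. Your extra remarks on independence via Lemma~\ref{lm:Part} are a welcome clarification but not a different argument.
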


\begin{proof}
  The set~$W_B$ contains all strings that take the \textsc{dfa}~$M$
  into some state of~$B$.  Moreover, all those strings take the
  hyper-minimal \textsc{dfa}~$N$ into a single state~$p \in P$; i.e.,
  $L(N, p) = W_B$ by Theorem~\ref{thm:Struc}.  Let
  \[ W'_B = \{ w \in W_B \mid w \in L(M) \} \qquad \text{and} \qquad
  W''_B = \{w \in W_B \mid w \notin L(M) \} \enspace; \] i.e., the
  partition into accepted and rejected strings (by~$M$) of~$W_B$,
  respectively.  Consequently, it is sufficient to compare the size of
  those sets because if $p \in G$ (i.e., $p$~is a final state of~$N$),
  then all strings of~$W''_B$ are errors.  This is due to the fact
  that they are rejected by~$M$, but accepted by~$N$.  On the other
  hand, the strings of~$W'_B$ are errors if $p$~is non-final.  Finally
  \begin{alignat*}{3}
    \abs{W'_B} &= \abs{ \{ w \in W_B \mid q \in F, w \in L(M, q) \}}
    \\
    &= \abs{ \{ w \in \Sigma^* \mid q \in B \cap F, w \in L(M, q) \}}
    &&= \sum_{q \in B \cap F} w_q \enspace,
  \end{alignat*}
  and similarly, $\abs{W''_B} = \sum_{q \in B \setminus F} w_q$. 
\end{proof}

Consequently, if and only if more strings are accepting (i.e.,
$\abs{W'_B} > \abs{W''_B}$), then the preamble state~$p \in P$ of~$N$
should be accepting.  This decision is codified in
Algorithm~\ref{alg:Finality}.  On our example
\textsc{dfa}~$M_{\text{ex}}$ of Fig.~\ref{fig:Ex} and the block~$B =
\{C, D\}$ it compares $W'_B = \{ aaa, ab \}$ and $W''_B = \{ aab \}$,
and thus decides that the state~$C$ of the
\textsc{dfa}~$N_{\text{ex}}$ of Fig.~\ref{fig:Ex} should be final.
Note that Lemma~\ref{lm:Part} shows that the errors are distinct for
different blocks $B_1$~and~$B_2$.  All of the following algorithms
will use the global variable~$e$, which will keep track of the number
of errors. Initially, it will be set to~$0$ and each discovered error
will increase it.  Finally, we assume that the vector~$w \in \nat^Q$
(see Algorithm~\ref{alg:NoPaths}) and the error matrix $E \in
\integer^{Q \times Q}$ (see Algorithm~\ref{alg:Errors}) have already
been computed and can be accessed in constant time.

\begin{algorithm}[t]
  \begin{algorithmic}[2]
    \REQUIRE a minimal \textsc{dfa}~$M = (Q, \Sigma, q_0, \delta, F)$ and
    a block~$B \in P_\sim$
    \ENSURE error count~$e$
    \smallskip
    \STATE $\displaystyle (\overline f, f) \gets \Bigl( \sum_{q \in B
      \cap F} w_q, \sum_{q \in B \setminus F} w_q \Bigr)$
      \COMMENT{errors for non-final and final state}
    \STATE $e \gets e + \min(\overline f, f)$
        \COMMENT{add smaller value to global error count}
    \STATE select $q \in B$ such that $q \in F$ if $\overline f > f$
        \COMMENT{select appropriate state}
    \RETURN $q$
      \COMMENT{return selected state}
  \end{algorithmic}
  \caption{\protect{\textsc{CompFinality}: Determine finality of a
      block of preamble states.}}
  \label{alg:Finality}
\end{algorithm}

\begin{lemma}
  \label{lm:Finality2}
  $\textsc{ComputeFinality}(M, B, w)$ adds the smallest number of
  errors of~$W_B$ committed by a hyper-minimal \textsc{dfa}~$N$
  for~$M$.  It runs in time~$O(\abs B)$ and returns a final state
  (of~$M$) if and only if $W_B \subseteq L(N)$.
\end{lemma}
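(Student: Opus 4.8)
The plan is to reduce the correctness claim to Lemma~\ref{lm:Finality} and then verify that the local comparison performed by the algorithm realizes the minimum. First I would recall from Lemma~\ref{lm:Finality} that every hyper-minimal \textsc{dfa}~$N$ for~$M$ commits exactly $\overline f = \sum_{q \in B \cap F} w_q$ errors of~$W_B$ when its merged preamble state~$p$ (for which $L(N, p) = W_B$ by Theorem~\ref{thm:Struc}) is non-final, and exactly $f = \sum_{q \in B \setminus F} w_q$ errors of~$W_B$ when~$p$ is final. These are the two values computed in line~1, so $\min(\overline f, f)$ is a lower bound on the number of errors of~$W_B$ committed by any hyper-minimal \textsc{dfa} for~$M$, and this is precisely the quantity added to~$e$ in line~2.

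It then remains to show that the bound is attained by the \textsc{dfa}~$N$ that the returned representative determines. Line~3 selects a state whose finality realizes the minimum: if $\overline f > f$, making~$p$ final yields the strictly smaller count~$f$, and a final state~$q \in B \cap F$ exists because $\overline f > f \ge 0$ forces $\overline f > 0$; otherwise $\overline f \le f$ and a non-final state is chosen, which exists because $w_q \ge 1$ for every (reachable) state of~$B$ implies $B \setminus F \neq \emptyset$ whenever $\overline f \le f$. In both cases the merged state~$p$ inherits the finality of the returned state~$q$ and therefore commits exactly $\min(\overline f, f)$ errors of~$W_B$, so the added count is optimal.

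For the running time, both sums in line~1 range over~$B$ with each summand~$w_q$ available in constant time, and the selection in line~3 scans~$B$ once, giving total cost~$O(\abs B)$. For the final-state characterization, the returned state~$q$ is final exactly when $\overline f > f$; in the resulting \textsc{dfa}~$N$ the state~$p$ inherits this finality and satisfies $L(N, p) = W_B$, and since~$M$ is minimal every state of~$B$ is reachable, so $W_B \neq \emptyset$. Hence $W_B \subseteq L(N)$ holds if and only if~$p$ is final, i.e.\ if and only if~$q$ is final. I expect the only real obstacle to be bookkeeping: keeping straight which of~$\overline f$ and~$f$ corresponds to the final versus the non-final choice, and confirming that line~3 is well defined in the boundary cases where $B \cap F$ or $B \setminus F$ is empty.
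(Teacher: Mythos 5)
Your proposal is correct and follows the same route as the paper: the paper's proof of this lemma is a one-line reduction to Lemma~\ref{lm:Finality} (``Algorithm~\ref{alg:Finality} implements the method of Lemma~\ref{lm:Finality} in the given run-time''), and you simply spell out the details of that reduction, including the well-definedness of the selection in line~3 and the finality characterization. No gaps.
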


\begin{proof}
  Algorithm~\ref{alg:Finality} implements the method of
  Lemma~\ref{lm:Finality} in the given run-time.
\end{proof}

For the third criterion, let us again consider a block~$B \in P_\sim$
of almost-equivalent preamble states and a symbol~$\sigma \in \Sigma$
such that $\bigcup_{q \in B} K_{\delta(q, \sigma)} \neq \emptyset$.
Clearly, $K_{\delta(q_1, \sigma)} = K_{\delta(q_2, \sigma)}$ for all
$q_1, q_2 \in B$ because $\sim$~is a congruence on~$M$.  We need to
determine the kernel state that will be the new transition target.  By
Theorem~\ref{thm:Struc} it has to be a kernel state because $\delta(q,
\sigma)$ is almost-equivalent to a kernel state.

\begin{lemma}
  \label{lm:Trans}
  Let $N = (P, \Sigma, p_0, \mu, G)$ be a hyper-minimal \textsc{dfa}
  for~$M$, and let $B \in P_\sim$ and $\sigma \in \Sigma$ be such that
  $K = \bigcup_{q \in B} K_{\delta(q, \sigma)} \neq \emptyset$.  Then
  the \textsc{dfa}~$N$ commits $\sum_{q \in B} w_q \cdot E_{\delta(q,
    \sigma), q'}$ errors of~$W_{B, \sigma}$ for some $q' \in K$.
\end{lemma}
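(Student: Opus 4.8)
The plan is to fix an arbitrary hyper-minimal \textsc{dfa}~$N = (P, \Sigma, p_0, \mu, G)$ for~$M$ and to count exactly which strings of~$W_{B,\sigma}$ lie in the symmetric difference $L(M) \simdiff L(N)$. Write $K = \bigcup_{q \in B} K_{\delta(q,\sigma)}$. Since $\sim$~is a congruence on~$M$, we have $K = K_{\delta(q,\sigma)}$ for every $q \in B$, so every state of~$K$ is almost-equivalent to every~$\delta(q,\sigma)$ with $q \in B$.

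First I would describe how~$N$ processes the strings of~$W_{B,\sigma}$. By Theorem~\ref{thm:Struc} the block~$B$ collapses to a single state~$p$ of~$N$ with $L(N, p) = W_B$ (as already used in the proof of Lemma~\ref{lm:Finality}). Let $t = \mu(p, \sigma)$ be the target of its $\sigma$-transition. Applying the first part of Theorem~\ref{thm:Struc} to a string~$u\sigma$ with $u \in W_B$ gives $t \sim \delta(q, \sigma)$, where $q \in B$ is the state reached by~$u$ in~$M$; hence $t$~is almost-equivalent to a kernel state of~$M$. Combining this with the kernel isomorphism~$h$ of Theorem~\ref{thm:Struc} and the hyper-minimality of~$N$ (which has no two distinct almost-equivalent states at least one of which is a preamble state), I would argue that $t$~must be a kernel state of~$N$, say $t = h(q')$ for a unique $q' \in \Ker(M)$. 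Then $q' \sim t \sim \delta(q,\sigma)$ forces $q' \in K$, and the kernel isomorphism yields $L(t, N) = L(q', M)$. This $q'$ is the state in the statement.

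Next I would read off the errors. Any $x \in W_{B,\sigma}$ decomposes as $x = u\sigma w$ with $u \in W_B$ and $w \in \Sigma^*$; if~$u$ leads~$M$ into the state $q \in B$, then $x \in L(M)$ iff $w \in L(\delta(q,\sigma), M)$, whereas in~$N$ the prefix~$u$ leads to~$p$ and then~$\sigma$ to~$t$, so $x \in L(N)$ iff $w \in L(t, N) = L(q', M)$. Thus $x$~is an error precisely when $w \in L(\delta(q,\sigma), M) \simdiff L(q', M)$, and for each fixed $q \in B$ there are exactly $E_{\delta(q,\sigma), q'}$ such~$w$ by Lemma~\ref{lm:EMatrix} (which applies because $\delta(q,\sigma) \sim q'$). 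Since the number of prefixes $u \in W_B$ reaching a given $q \in B$ is $\abs{L(M, q)} = w_q$, summing over $q \in B$ yields the claimed total $\sum_{q \in B} w_q \cdot E_{\delta(q,\sigma), q'}$.

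The step that needs genuine care — and the main obstacle — is that this count is correct only if the decomposition $x = u\sigma w$ with $u \in W_B$ is \emph{unique}; otherwise one error string could be tallied several times. I would prove uniqueness using the fact that $B \in P_\sim$ forces $K_q = \emptyset$ for every $q \in B$, since any kernel state almost-equivalent to~$q$ would itself lie in the all-preamble block~$B$. Indeed, if $u_1\sigma w_1 = u_2\sigma w_2$ with $u_1, u_2 \in W_B$ and $\abs{u_1} < \abs{u_2}$, then $u_2 = u_1\sigma z$, so in~$M$ the $B$-state reached by~$u_2$ is reached from~$\delta(q_1, \sigma)$ by reading~$z$; as $\delta(q_1,\sigma)$ is almost-equivalent to a kernel state and successors of kernel states are kernel states, the congruence property shows this $B$-state is almost-equivalent to a kernel state, contradicting $K_q = \emptyset$. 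Hence $u_1 = u_2$, the decomposition is unique, and no error is double-counted. Note that $\delta(q,\sigma)$ itself need not be a kernel state — it may be a preamble state that is merely almost-equivalent to one — which is exactly why the argument must route through~$K$ rather than assume $\delta(q,\sigma) \in \Ker(M)$.
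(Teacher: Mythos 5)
Your proof is correct and follows essentially the same route as the paper's: identify the single state~$p$ of~$N$ with $L(N, p) = W_B$, show that $\mu(p, \sigma)$ is a kernel state of~$N$ corresponding under the kernel isomorphism to some $q' \in K$, and count the errors of~$W_{B,\sigma}$ via the decomposition $u\sigma w$ with $u \in W_B$ and $w \in L(\delta(q,\sigma), M) \simdiff L(q', M)$. You additionally spell out two steps the paper leaves implicit --- why $\mu(p,\sigma)$ must lie in $\Ker(N)$ and why the decomposition is unique --- and both arguments are sound.
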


\begin{proof}
  Since $W_{B, \sigma} = \{ u\sigma v \mid u \in W_B, v \in
  \Sigma^*\}$, each string~$w \in W_{B, \sigma}$ has a
  prefix~$u\sigma$ with $u \in W_B$.  Clearly, each $u \in W_B$ takes
  the \textsc{dfa}~$M$ into some state of~$B$, and the hyper-minimal
  \textsc{dfa}~$N$ into a state state~$p \in P$ such that $L(N, p) =
  W_B$ by Theorem~\ref{thm:Struc}.  Moreover, $\mu(p, \sigma) = p'$
  for some $p' \in \Ker(N)$, for which an equivalent state $q' \in Q$
  exists in~$M$ by Theorem~\ref{thm:Struc} because the kernels of
  $M$~and~$N$ are \textsc{dfa} isomorphic.  Consequently, $L(p', N) =
  L(q', M)$ and $N$~accepts the strings
  \[ \{ u \sigma v \mid u \in W_B, v \in L(q', M) \} \subseteq W_{B,
    \sigma} \] and rejects the remaining strings of~$W_{B, \sigma}$.
  On the other hand, the \textsc{dfa}~$M$ accepts the strings
  $\bigcup_{q \in B} \{ u \sigma v \mid u \in L(M, q), v \in
  L(\delta(q, \sigma), M) \} \subseteq W_{B, \sigma}$ and rejects the
  remaining strings of~$W_{B, \sigma}$.  Clearly, $\delta(q, \sigma)
  \sim q'$.  Consequently, the errors are exactly $\bigcup_{q \in B}
  \{ u \sigma v \mid u \in L(M, q), v \in L(\delta(q, \sigma), M)
  \simdiff L(q', M) \} \subseteq W_{B, \sigma}$, which yields the
  $\sum_{q \in B} w_q \cdot E_{\delta(q, \sigma), q'}$~errors
  of~$W_{B, \sigma}$ because the decomposition is unique.
\end{proof}

Recall that $w_q$~and~$E_{q,p}$ have been pre-computed already.  Next,
we discuss the full merging algorithm (see
Algorithm~\ref{alg:OptMerge}).  The initial state is handled in
\mbox{lines~1--2}.  In lines~5--7 we first handle the already
discussed decision for the finality of blocks~$B$ of preamble states
and perform the best merge into state~$q$.  In lines~8--11 we
determine the best target state for all transitions from a preamble to
a kernel state.  The smallest error count is added to the global error
count in line~10 and the corresponding designated kernel state is
selected as the new target of the transition in line~11.  This makes
all preamble states that are almost-equivalent to this kernel state
unreachable, so they can be removed.  On our example
\textsc{dfa}~$M_{\text{ex}}$ of Fig.~\ref{fig:Ex}, we have that
$\delta(C, a) = G$ is a transition from the block $\{C, D\} \in
P_\sim$ to a kernel state.  Consequently, we compare $\sum_{q \in \{C,
  D\}} w_q \cdot E_{\delta(q, a), q'}$ for all kernel states~$q' \in K_G$:
\[ \sum_{q \in \{C, D\}} w_q \cdot E_{\delta(q, a), I} = 2 \cdot 1 + 1
\cdot 1 = 3 \quad \text{and} \quad \sum_{q \in \{C, D\}} w_q \cdot
E_{\delta(q, a), J} = 2 \cdot 4 + 1 \cdot 4 = 12 \enspace. \]  

\begin{algorithm}[t]
  \begin{algorithmic}[2]
    \REQUIRE a minimal \textsc{dfa}~$M = (Q, \Sigma, q_0, \delta, F)$
    and its almost-equivalent states~$\sim$
    \ENSURE error count~$e$; initially~$0$
    \smallskip
    \IF{$K_{q_0} \neq \emptyset$}
      \RETURN $\langle (Q, \Sigma, \argmin_{q \in K_{q_0}} E_{q_0, q},
      \delta, F), \min_{q \in K_{q_0}} E_{q_0, q} \rangle$ \\
%        \COMMENT{set best initial state and return error count}%
        \smallskip
    \ENDIF

    \STATE $N \gets M$ where $N = (P, \Sigma, p_0, \mu, G)$
      \COMMENT{initialize output \textsc{dfa}}
    \FORALL{$B \in P_\sim$}
      \STATE $q \gets \textsc{CompFinality}(M, B)$
        \COMMENT{determine finality of merged state}
      \FORALL{$p \in B$}
        \STATE $N \gets \merge_N(p \to q)$
          \COMMENT{perform the merges} \smallskip
      \ENDFOR
      \FORALL{$\sigma \in \Sigma$}
        \IF{$K = K_{\delta(q, \sigma)} \neq \emptyset$}
          \STATE $\displaystyle e \gets e + \min_{q \in K} \Bigl(
          \sum_{p \in B} w_p \cdot E_{\delta(p, \sigma), q} \Bigr)$
            \COMMENT{add best error count}
          \STATE $\displaystyle \mu(q, \sigma) \gets \argmin_{q
            \in K} \Bigl( \sum_{p \in B} w_p \cdot E_{\delta(p,
            \sigma), q} \Bigr)$
            \COMMENT{update follow state}
        \ENDIF
      \ENDFOR
    \ENDFOR
    \RETURN $(N, e)$
  \end{algorithmic}
  \caption{\protect{\textsc{OptMerge}: Optimal merging of
      almost-equivalent states.}}
  \label{alg:OptMerge}
\end{algorithm}

\begin{theorem}
  \label{thm:Correct}
  Algorithm~\ref{alg:OptMerge} runs in time~$O(mn)$ and returns a
  hyper-optimal dfa for~$M$.  In addition, the number of committed
  errors is returned.
\end{theorem}

\begin{proof}
  The time complexity is easy to check, so we leave it as an exercise.
  Since the choices (finality, transition target, initial state) are
  independent by Lemma~\ref{lm:Part}, all hyper-minimal \textsc{dfa}s
  for~$M$ are considered in Algorithm~\ref{alg:OptMerge} by
  Theorem~\ref{thm:Struc}.  Consequently, we can always select the
  local optimum for each choice (using Lemmata \ref{lm:Initial},
  \ref{lm:Finality}, and \ref{lm:Trans}) to obtain a global optimum,
  which proves that the returned number is the minimal number of
  errors among all hyper-minimal \textsc{dfa}s.  Mind that the number
  of errors would be infinite for a hyper-minimal \textsc{dfa} that is
  not almost-equivalent to~$M$.  Moreover, it is obviously the number
  of errors committed by the returned \textsc{dfa}, which proves that
  the returned \textsc{dfa} is hyper-optimal for~$M$.
\end{proof}

\begin{corollary}[{\protect{of Theorem~\ref{thm:Correct}}}]
  \label{cor:Main}
  For every \textsc{dfa}~$M$ we can obtain a hyper-optimal
  \textsc{dfa} for~$M$ in time~$O(mn)$.
\end{corollary}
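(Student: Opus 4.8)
The plan is to reduce the statement to Theorem~\ref{thm:Correct}, which already establishes the claim for a \emph{minimal} input \textsc{dfa}. Given an arbitrary \textsc{dfa}~$M$ with $m = \abs{Q \times \Sigma}$ and $n = \abs Q$, the first step is to run \textsc{Hopcroft}'s minimization algorithm to obtain an equivalent minimal \textsc{dfa}~$M'$; by the cited bound this takes time~$O(m \log n)$, and since minimization never increases the number of states, $M'$ has $n' \le n$ states and $m' = n' \cdot \abs\Sigma \le m$ transitions.

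The key observation for correctness is that hyper-optimality depends on the input \textsc{dfa} only through its language: a \textsc{dfa}~$N$ is hyper-optimal for~$M$ precisely when it is hyper-minimal and minimizes $\abs{L(M) \simdiff L(N)}$ among all hyper-minimal \textsc{dfa}s. Since minimization is language-preserving, $L(M) = L(M')$, and therefore $\abs{L(M) \simdiff L(N)} = \abs{L(M') \simdiff L(N)}$ for every candidate~$N$. Consequently, the notions ``hyper-optimal for~$M$'' and ``hyper-optimal for~$M'$'' coincide, so it suffices to compute a hyper-optimal \textsc{dfa} for the minimal \textsc{dfa}~$M'$.

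The remaining step is to apply the machinery of Sections~\ref{sec:Errors} and~\ref{sec:Merge} to~$M'$. I would compute the almost-equivalence~$\sim$ on~$M'$ (in time~$O(m' \log n')$), precompute the error matrix~$E$ via Algorithm~\ref{alg:Errors} (in time~$O(m'n')$ by Theorem~\ref{thm:Errors}) and the access path vector~$w$ via Algorithm~\ref{alg:NoPaths} (in time~$O(m')$ by Theorem~\ref{thm:NoPaths}), and then invoke Algorithm~\ref{alg:OptMerge}, which by Theorem~\ref{thm:Correct} returns a hyper-optimal \textsc{dfa} for~$M'$ together with its error count in time~$O(m'n')$. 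Since every one of these bounds is at most~$O(mn)$ (using $m' \le m$, $n' \le n$, and the fact that $O(m \log n)$ is dominated by~$O(mn)$), the whole procedure runs in time~$O(mn)$ and, by the previous paragraph, returns a hyper-optimal \textsc{dfa} for~$M$, as required.

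I do not expect a genuine obstacle here; the proof is a routine reduction. The only point needing care is the language-invariance argument of the second paragraph, which is what guarantees that replacing~$M$ by its minimization~$M'$ does not alter the set of hyper-optimal target \textsc{dfa}s and hence justifies pre-minimizing before applying Theorem~\ref{thm:Correct}.
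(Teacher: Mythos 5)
Your proof is correct and follows exactly the route the paper intends: pre-minimize with \textsc{Hopcroft}'s algorithm (noting that minimization preserves the language and hence the set of hyper-optimal targets), then apply Theorem~\ref{thm:Correct} together with the precomputations of Sections~\ref{sec:Errors} and~\ref{sec:Merge}, all of which fit within~$O(mn)$. The paper leaves this reduction implicit, and your explicit language-invariance argument is the right justification for it.
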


%% DANIEL ANFANG %%

\section{Empirical results}
\label{sec:Empirical}
In order to evaluate the algorithm, we compare it to another
hyper-minimization algorithm \cite{holmal10} that does not aim for low
error profile.  Since the algorithm of~\cite{holmal10} is
(``don't-care'') non-deterministic (in the selection of merge
targets), we implemented a simple stack discipline, which always pops
the first element.  For a varying set of parameters, 100~random
\textsc{dfa}s have been generated and run through both algorithms. The
number of saved states as well as the number of errors are reported.
First we explain how the test \textsc{dfa}s were generated, describe
the experimental setup, and then present and discuss the results.

We use an algorithm based on the original algorithm in
\textsc{Hanneforth}'s FSM<2.0> library~\cite{han09}, which generates
random non-deterministic finite automata. This model is closely
related to \textsc{Karp}'s model of random directed graphs (see
Chapter~2 of~\cite{bol01} or~\cite{tabvar05} for a discussion of
different models).  The only difference is the introduction of an
additional parameter: the \emph{cyclicity}~$a$.  The complete set of
parameters is as follows:
\begin{itemize}
\item[$\abs Q$] This integer limits the number of states in
  the non-deterministic automaton.
\item[$\abs \Sigma$] This integer coincides with the number of
  alphabet symbols.
\item[$d_\delta$] Uniform probability determining whether a given
  transition $p \stackrel\sigma\to q$ exists; we call
  $d_\delta \cdot \abs Q$ the transition density. 
\item[$d_F$] Uniform probability for a given state to be final.
\item[$a$] This real-valued parameter~$0 \leq a \leq 1$ controls the
  cyclicity by constraining ``backward-pointing'' transitions.  In
  particular, if~$a = 0$, then the automaton will be acyclic, and
  if~$a = 1$, then all transitions are equally probable.
\end{itemize}
A non-deterministic automaton $M$ is generated in the following way:
(i)~The set of states is $Q = \{0, 1, 2, \dotsc , \abs Q - 1\}$ with
initial state~$0$.  (ii)~A state~$q \in Q$ is final if and only if
$f_q < d_F$, where $0 \leq f_q \leq 1$ is a random
value. (iii)~Finally, for every $(q, a, p) \in Q \times \Sigma \times
Q$, we generate a random number $0 \leq f_{(q, a, p)} \leq 1$.  The
transition $q \stackrel\sigma\to p$ is present in~$M$ if and only if
\[ f_{(q, a, p)} <
\begin{cases}
  d_\delta & \text{if } p > q \\
  a \cdot d_\delta & \text{otherwise.}
\end{cases} \] The latter case corresponds to ``backward-pointing''
transitions and creates cycles.

For each set of parameters, we have generated 100~\textsc{dfa}s.
These \textsc{dfa}s were obtained by determinizing and minimizing the
randomly generated non-deterministic test automata.  All \textsc{dfa}s
have then been hyper-minimized, and the optimal hyper-minimal
\textsc{dfa}s have been compared to the ones resulting from na\"\i ve
hyper-minimization.\footnote{The complete C${}^{++}$~source code will
  be made available, and the FSM<2.0> library is available at
  \url{http://tagh.de/tom/?p=1737}.}  The obtained results are shown in
Figs.\ \ref{fig:Graph1}~and~\ref{fig:Graph2}.

\begin{figure}[t]
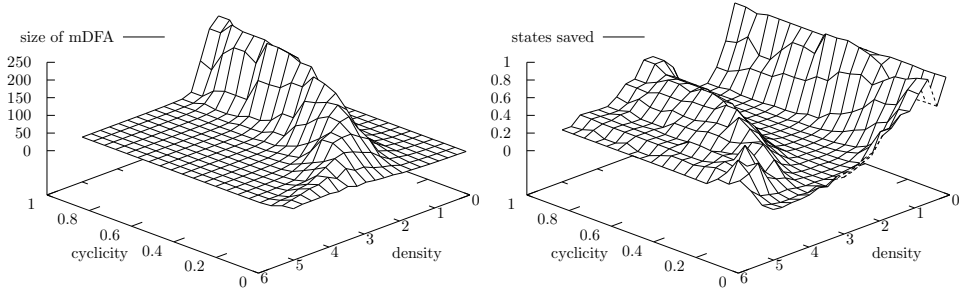

  \centering
  \includegraphics[scale=0.6]{figure/plot0.mps} \hfill
  \includegraphics[scale=0.6]{figure/plot1.mps} 
  \caption{Hyper-minimization performance for non-deterministic
    automata with 30 states, $|\Sigma| = 2$ and $0.3 \leq d_F \leq 0.7$.
    ``Density'' refers to $d_\delta \cdot \abs Q$.  Left: Average size
    of the minimal \textsc{dfa}.  Right: Ratio of states saved by
    hyper-minimization.  Values range over the full scale; i.e., 
    they approach~$0$ outside the ridge and inside the valley.}
  \label{fig:Graph1}
\end{figure}

Figure~\ref{fig:Graph1} shows the size of the minimal \textsc{dfa}s
and the potential of saving states by hyper-minimization.  The left
graph in Fig.~\ref{fig:Graph1} shows a ridge, which corresponds to
cases in which \textsc{dfa} minimization is hard and results in a
large minimal \textsc{dfa}~\cite{tabvar05}.  It is located around a
transition density of~$d_\delta \cdot \abs Q = 1.25$ for a cyclicity
of~$1$, and it moves to higher densities for less cyclic automata.
Essentially, the same ridge was observed by~\cite{tabvar05} (for the
case~$a = 1$).  The right graph in Fig.~\ref{fig:Graph1} shows that
these hard instances for \textsc{dfa} minimization are also hard for
hyper-minimization in the sense that only very few states can be
saved.  However, for the remaining instances a considerable reduction
in the number of states is achievable by hyper-minimization.

\begin{figure}[t]
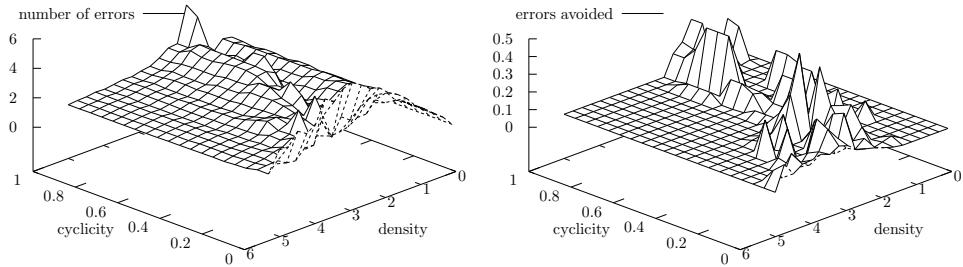

  \centering
  \includegraphics[scale=0.6]{figure/plot2.mps} \hfill
  \includegraphics[scale=0.6]{figure/plot3.mps} 
  \caption{Hyper-optimization performance. Left: Absolute number of
    errors in na\"\i ve hyper-minimal \textsc{dfa}s. Right: Ratio of
    errors avoided by hyper-optimization.} 
  \label{fig:Graph2}
\end{figure}

If we focus on the contribution of this paper, then we find that the
number of errors can be considerably reduced. Figure~\ref{fig:Graph2}
shows the absolute number of errors for hyper-minimal \textsc{dfa}s
(left graph) and the ratio of errors avoided by the hyper-optimal
automaton (right graph).  The absolute number of errors for the hard
instances, which can only be reduced a little, is higher than for the
easy instances.  However, the hyper-optimal \textsc{dfa}s avoid a
higher ratio of errors for the hard instances, which dramatically
reduces the number of committed errors paid for the
small reduction.

%% DANIEL ENDE %%

\bibliography{extra}
\bibliographystyle{ijfcs}

\end{document}